\newtheorem{lemma}{Lemma}
\newtheorem{theorem}{Theorem}
\newtheorem{corollary}{Corollary}
\newtheorem{remark}{Remark}
\begin{document}
%\markboth{\emph{Mode Switching.}}{Shell \MakeLowercase{\textit{et
%al.}}: Report}

\title{Mode Switching for MIMO Broadcast Channel Based on Delay and Channel Quantization}
\author{Jun Zhang,  Robert W. Heath Jr., Marios Kountouris, and Jeffrey G. Andrews
\thanks{The authors are with the Wireless Networking and Communications
Group, Department of Electrical and Computer Engineering, The
University of Texas at Austin, 1 University Station C0803, Austin,
TX 78712--0240. Email: \{jzhang2, rheath, mkountouris, jandrews\}@ece.utexas.edu. This work has been supported in part by AT\&T Labs, Inc.}}%\\Last Modified: \today}%

\maketitle

\begin{abstract}
Imperfect channel state information degrades the performance of
multiple-input multiple-output (MIMO) communications; its effect on
single-user (SU) and multi-user (MU) MIMO transmissions are quite
different. In particular, MU-MIMO suffers from residual inter-user
interference due to imperfect channel state information while
SU-MIMO only suffers from a power loss. This paper compares the
throughput loss of both SU and MU MIMO on the downlink due to delay
and channel quantization. Accurate closed-form approximations are
derived for the achievable rates for both SU and MU MIMO. It is
shown that SU-MIMO is relatively robust to delayed and quantized
channel information, while MU-MIMO with zero-forcing precoding loses
spatial multiplexing gain with a fixed delay or fixed codebook size.
Based on derived achievable rates, a mode switching algorithm is
proposed that switches between SU and MU MIMO modes to improve the
spectral efficiency, based on the average signal-to-noise ratio
(SNR), the normalized Doppler frequency, and the channel
quantization codebook size. The operating regions for SU and MU
modes with different delays and codebook sizes are determined, which
can be used to select the preferred mode. It is shown that the MU
mode is active only when the normalized Doppler frequency is very
small and the codebook size is large.
\end{abstract}

\begin{keywords}
Multi-user MIMO, adaptive transmission, mode switching, imperfect
channel state information at the transmitter (CSIT), zero-forcing
precoding.
\end{keywords}

\section{Introduction}
Over the last decade, the point-to-point multiple-input
multiple-output (MIMO) link (SU-MIMO) has been extensively
researched and has transited from a theoretical concept to a
practical technique \cite{Telatar99,Goldsmith03}. Due to space and
complexity constraints, however, current mobile terminals only have
one or two antennas, which limits the performance of the SU-MIMO
link. Multi-user MIMO (MU-MIMO) provides the opportunity to overcome
such a limitation by communicating with multiple mobiles
simultaneously. It effectively increases the number of equivalent
spatial channels and provides spatial multiplexing gain proportional
to the number of transmit antennas at the base station even with
single-antenna mobiles. In addition, MU-MIMO has higher immunity to
propagation limitations faced by SU-MIMO, such as channel rank loss
and antenna correlation \cite{GesKou07SPMag}.

There are many technical challenges that must be overcome to exploit
the full benefits of MU-MIMO. A major one is the requirement of
channel state information at the transmitter (CSIT), which is
difficult to get especially for the downlink/broadcast channel. For
the MIMO downlink with $N_t$ transmit antennas and $N_r$ receive
antennas, with full CSIT the sum throughput can grow linearly with
$N_t$ even when $N_r=1$, but without CSIT the spatial multiplexing
gain is the same as for SU-MIMO, i.e. the throughput grows linearly
with $\min(N_t,N_r)$ at high SNR \cite{HasSha07JSAC}. Limited
feedback is an efficient way to provide partial CSIT, which feeds
back the quantized channel information to the transmitter via a
low-rate feedback channel \cite{LovHea04CommMag,LovHea08JSAC}.
However, such imperfect CSIT will greatly degrade the throughput
gain provided by MU-MIMO \cite{Jin06IT,DinLov07Tsp}. Besides
quantization, there are other imperfections in the available CSIT,
such as estimation error and feedback delay. With imperfect CSIT, it
is not clear whether--or more to the point, when-- MU-MIMO can
outperform SU-MIMO. In this paper, we compare SU and MU-MIMO
transmissions in the MIMO downlink with CSI delay and channel
quantization, and propose to switch between SU and MU MIMO modes
based on the achievable rate of each technique with practical
receiver assumptions.

\subsection{Related Work}
For the MIMO downlink, CSIT is required to separate the spatial
channels for different users. To obtain the full spatial
multiplexing gain for the MU-MIMO system employing zero-forcing (ZF)
or block-diagonalization (BD) precoding, it was shown in
\cite{Jin06IT,RavJin07JSAC} that the quantization codebook size for
limited feedback needs to increase linearly with SNR (in dB) and the
number of transmit antennas. Zero-forcing dirty-paper coding and
channel inversion systems with limited feedback were investigated in
\cite{DinLov07Tsp}, where a sum rate ceiling due to a fixed codebook
size was derived for both schemes. In \cite{YooJin07JSAC}, it was
shown that to exploit multiuser diversity for ZF, both channel
direction and information about signal-to-interference-plus-noise
ratio (SINR) must be fed back. More recently, a comprehensive study
of the MIMO downlink with ZF precoding was done in
\cite{CaiJin07Submit}, which considered downlink training and
explicit channel feedback and concluded that significant downlink
throughput is achievable with efficient CSI feedback. For a compound
MIMO broadcast channel, the information theoretic analysis in
\cite{CaiJin07Asilomar} showed that scaling the CSIT quality such
that the CSIT error is dominated by the inverse of the SNR is both
necessary and sufficient to achieve the full spatial multiplexing
gain.

Although previous studies show that the spatial multiplexing gain of
MU-MIMO can be achieved with limited feedback, it requires the
codebook size to increase with SNR and the number of transmit
antennas. Even if such a requirement is satisfied, there is an
inevitable rate loss due to quantization error, plus other CSIT
imperfections such as estimation error and delay. In addition, most
of prior work focused on the achievable spatial multiplexing gain,
mainly based on the analysis of the rate loss due to imperfect CSIT,
which is usually a loose bound
\cite{Jin06IT,RavJin07JSAC,CaiJin07Asilomar}. Such analysis cannot
accurately characterize the throughput loss, and no comparison with
SU-MIMO has been made. In this paper, we derive good approximations
for the achievable throughput for both SU and MU MIMO systems with
fixed channel information accuracy, i.e. with a fixed delay and a
fixed quantization codebook size. We are interested in the following
question: \emph{With imperfect CSIT, including delay and channel
quantization, when can MU-MIMO actually deliver a throughput gain
over SU-MIMO?} Based on this, we can select the one with the higher
throughput as the transmission technique.

\subsection{Contributions}
In this paper, we investigate SU and MU-MIMO in the broadcast
channel with CSI delay and limited feedback. The main contributions
of this paper are as follows.
\begin{itemize}
\item \textbf{SU vs. MU Analysis}. We investigate the impact of imperfect CSIT due to delay and channel
quantization. We show that the SU mode is more robust to imperfect
CSIT as it only suffers a constant rate loss, while MU-MIMO suffers
more severely from the residual inter-user interference. We
characterize the residual interference due to delay and channel
quantization, which shows these two effects are equivalent. Based on
an independence approximation of the interference terms and the
signal term, accurate closed-form approximations are derived for the
ergodic rates for both SU and MU MIMO modes.

\item \textbf{Mode Switching Algorithm.} A SU/MU
mode switching algorithm is proposed based on the ergodic sum rate
as a function of the average SNR, normalized Doppler frequency, and
the quantization codebook size. This transmission technique only
requires a small number of users to feed back instantaneous channel
information. The mode switching points can be calculated from the
previously derived approximations for ergodic rates.

\item \textbf{Operating Regions.} The \emph{operating regions} for SU and
MU modes are determined, from which we can determine the active mode
and find the condition that activates each mode. With a fixed delay
and codebook size, if the MU mode is possible at all, there are two
mode switching points, with the SU mode preferred at both low and
high SNRs. The MU mode will only be activated when the normalized
Doppler frequency is very small and the codebook size is large. From
the numerical results, the minimum feedback bits per user to get the
MU mode activated grow approximately linearly with the number of
transmit antennas.

\end{itemize}

The rest of the paper is organized as follows. The system model and
some assumptions are presented in Section \ref{Sec:SysMod}. The
transmission techniques for both SU and MU MIMO modes are described
in Section \ref{Sec:TransTech}. The rate analysis for both SU and MU
modes and the mode switching are done in Section \ref{Sec:ModSel}.
Numerical results and conclusions are in Section \ref{Sec:Num} and
\ref{Sec:Con}, respectively.

\section{System Model}\label{Sec:SysMod}
We consider a MIMO downlink, where the transmitter (the base
station) has $N_t$ antennas and each mobile user has a single
antenna. The system parameters are listed in Table \ref{Tbl1}.
During each transmission period, which is less than the channel
coherence time and the channel is assumed to be constant, the base
station transmits to one (SU-MIMO mode) or multiple (MU-MIMO mode)
users. The discrete-time complex baseband received signal at the
$u$-th user at time $n$ is given as\footnote{In this paper, we use
uppercase boldface letters for matrices ($\mathbf{X}$) and lowercase
boldface for vectors ($\mathbf{x}$). $\mathbb{E}[\cdot]$ is the
expectation operator. The conjugate transpose of a matrix
$\mathbf{X}$ (vector $\mathbf{x}$) is $\mathbf{X}^*$
($\mathbf{x}^*$). Similarly, $\mathbf{X}^\dag$ denotes the
pseudo-inverse, $\tilde{\mathbf{x}}$ denotes the normalized vector
of $\mathbf{x}$, i.e.
$\tilde{\mathbf{x}}=\frac{\mathbf{x}}{\|\mathbf{x}\|}$, and
$\hat{\mathbf{x}}$ denotes the quantized vector of
$\tilde{\mathbf{x}}$.}
\begin{equation}
{y}_u[n]=\mathbf{h}^*_u[n]\sum_{u'=1}^{U}\mathbf{f}_{u'}[n]{x}_{u'}[n]+{z}_u[n],
\end{equation}
where $\mathbf{h}_u[n]$ is the $N_t\times{1}$ channel vector from
the transmitter to the $u$-th user, and $z_u[n]$ is the normalized
complex Gaussian noise vector, i.e. $z_u[n]\sim\mathcal{CN}(0,1)$.
${x}_u[n]$ and $\mathbf{f}_u[n]$ are the transmit signal and
$N_t\times{1}$ precoding vector for the $u$-th user, respectively.
The transmit power constraint is
$\mathbb{E}\left\{\mathbf{x}^*[n]\mathbf{x}[n]\right\}=P$, where
$\mathbf{x}[n]=[x^*_1,x^*_2,\cdots,x^*_U]^*$. As the noise is
normalized, $P$ is also the average transmit SNR.

To assist the analysis, we assume that the channel $\mathbf{h}_u[n]$
is well modeled as a spatially white Gaussian channel, with entries
${h}_{i,j}[n]\sim\mathcal{CN}(0,1)$, and the channels are i.i.d.
over different users. The results will be different for different
channel models. For example, a limited feedback system with line of
sight MIMO channel requires fewer feedback bits compared to the
Rayleigh channel \cite{RavJin07Glob}. The investigation of other
channel models is left to future work.

We consider two of the main sources of the CSIT imperfection--delay
and quantization error\footnote{For a practical system, the feedback
bits for each user is usually fixed, and there will inevitably be
delay in the available CSI, both of which are difficult or even
impossible to adjust. Other effects such as channel estimation error
can be made small such as by increasing the transmit power or the
number of pilot symbols.}, specified as follows.

\subsection{CSI Delay Model} We consider a stationary ergodic Gauss-Markov block fading process \cite[Sec. 16--1]{Hay96},
where the channel stays constant for a symbol duration and changes
from symbol to symbol according to
\begin{equation}\label{ChDelay}
\mathbf{h}[n]=\rho{\mathbf{h}}[n-1]+\mathbf{e}[n],
\end{equation}
where $\mathbf{e}[n]$ is the channel error vector, with i.i.d.
entries $e_{i}[n]\sim\mathcal{CN}(0,\epsilon_e^2)$, and it is
uncorrelated with $\mathbf{h}[n-1]$. We assume the CSI delay is of
one symbol. It is straightforward to extend the results to the
scenario with a delay of multiple symbols. For the numerical
analysis, the classical Clarke's isotropic scattering model will be
used as an example, for which the correlation coefficient is
$\rho=J_0(2\pi{f_d}T_s)$ with Doppler spread $f_d$ \cite{Cla68},
where $J_0(\cdot)$ is the zero-th order Bessel function of the first
kind. The variance of the error vector is $\epsilon_e^2=1-\rho^2$.
Therefore, both $\rho$ and $\epsilon_e$ are determined by the
normalized Doppler frequency $f_dT_s$.

The channel in \eqref{ChDelay} is widely-used to model the
time-varying channel. For example, it is used to investigate the
impact of feedback delay on the performance of closed-loop transmit
diversity in \cite{OngGat01Tcomm} and the system capacity and bit
error rate of point-to-point MIMO link in \cite{NguAnd04PIMRC}. It
simplifies the analysis, and the results can be easily extended to
other scenarios. Essentially, this model is of the form
\begin{equation}
\mathbf{h}[n]=\mathbf{g}[n]+\mathbf{e}[n],
\end{equation}
where $\mathbf{g}[n]$ is the available CSI at time $n$ with an
uncorrelated error vector $\mathbf{e}[n]$,
$\mathbf{g}[n]\sim\mathcal{CN}(\mathbf{0},(1-\epsilon_e^2)\mathbf{I})$,
and
$\mathbf{e}[n]\sim\mathcal{CN}(\mathbf{0},\epsilon_e^2\mathbf{I})$.
It can be used to consider the effect of other imperfect CSIT, such
as estimation error and analog feedback. The difference is in
$\mathbf{e}[n]$, which has different variance $\epsilon_e^2$ for
different scenarios. Some examples are given as follows.

\paragraph{Estimation Error} If the receiver obtains the CSI
through MMSE estimation from $\tau_p$ pilot symbols, the error
variance is $\epsilon_e^2=\frac{1}{1+\tau_p\gamma_p}$, where
$\gamma_p$ is the SNR of the pilot symbol \cite{Poor94}.

\paragraph{Analog Feedback} For analog feedback, the error variance is $\epsilon_e^2=\frac{1}{1+\tau_{ul}\gamma_{ul}}$,
where $\tau_{ul}$ is the number of channel uses per channel
coefficient and $\gamma_{ul}$ is the SNR on the uplink feedback
channel \cite{CaiJin07ISIT}.

\paragraph{Analog Feedback with Prediction} As shown in \cite{KobCai07JSAC}, for analog feedback
with a $d$-step MMSE predictor and the Gauss-Markov model, the error
variance is
$\epsilon_e^2=\rho^{2d}\epsilon_0+(1-\rho^2)\sum_{l=0}^{d-1}\rho^{2l}$,
where $\rho$ is the same as in \eqref{ChDelay} and $\epsilon_0$ is
the Kalman filtering mean-square error.

Therefore, the results in this paper can be easily extended to these
systems. In the following parts, we focus on the effect of CSI
delay.

\subsection{Channel Quantization Model} We consider frequency-division
duplexing (FDD) systems, where limited feedback techniques provide
partial CSIT through a dedicated feedback channel from the receiver
to the transmitter. The channel direction information for the
precoder design is fed back using a quantization codebook known at
both the transmitter and receiver.

The quantization is chosen from a codebook of unit norm vectors of
size $L=2^B$. We assume each user uses a different codebook to avoid
the same quantization vector. The codebook for user $u$ is
$\mathcal{C}_u=\{\mathbf{c}_{u,1},\mathbf{c}_{u,2},\cdots,\mathbf{c}_{u,L}\}$.
Each user quantizes its channel to the closest codeword, where
closeness is measured by the inner product. Therefore, the index of
channel for user $u$ is
\begin{equation}\label{eq_Q}
I_u=\arg\max_{1\leq{\ell}\leq{L}}|\tilde{\mathbf{h}}_u^*\mathbf{c}_{u,\ell}|.
\end{equation}
Each user needs to feed back $B$ bits to denote this index, and the
transmitter has the quantized channel information
$\hat{\mathbf{h}}_u=\mathbf{c}_{u,I_u}$. As the optimal vector
quantizer for this problem is not known in general, random vector
quantization (RVQ) \cite{SanHon04ISIT} is used, where each
quantization vector is independently chosen from the isotropic
distribution on the $N_t$-dimensional unit sphere. It has been shown
in \cite{Jin06IT} that RVQ can facilitate the analysis and provide
performance close to the optimal quantization. In this paper, we
analyze the achievable rate averaged over both RVQ-based random
codebooks and fading distributions.

An important metric for the limited feedback system is the squared
angular distortion, defined as
$\sin^2\left(\theta_u\right)=1-|\tilde{\mathbf{h}}_u^*\hat{\mathbf{h}}_u|^2$,
where
$\theta_u=\angle\left(\tilde{\mathbf{h}}_u,\hat{\mathbf{h}}_u\right)$.
With RVQ, it was shown in \cite{Jin06IT,AuLov07Twc} that the
expectation in i.i.d. Rayleigh fading is given by
\begin{equation}
\mathbb{E}_\theta\left[\sin^2\left(\theta_u\right)\right]=2^B\cdot\beta\left(2^B,\frac{N_t}{N_t-1}\right),
\end{equation}
where $\beta(\cdot)$ is the beta function. It can be tightly bounded
as \cite{Jin06IT}
\begin{equation}\label{eqnhh}
\frac{N_t-1}{N_t}2^{-\frac{B}{N_t-1}}\leq\mathbb{E}\left[\sin^2\left(\theta_u\right)\right]\leq{2^{-\frac{B}{N_t-1}}}.
\end{equation}

\section{Transmission Techniques}\label{Sec:TransTech}
In this section, we describe the transmission techniques for both SU
and MU MIMO systems with perfect CSIT, which will be used in the
subsequent sections for imperfect CSIT systems. By doing this, we
focus on the impacts of imperfect CSIT on the conventional
transmission techniques. Designing imperfect CSIT-aware precoders is
left to future work. Throughout this paper, we use the achievable
ergodic rate as the performance metric for both SU and MU-MIMO
systems. The base station transmits to a single user ($U=1$) for the
SU-MIMO system and to $N_t$ users ($U=N_t$) for the MU-MIMO system.
The SU/MU mode switching algorithm is also described.

\subsection{SU-MIMO System}
With perfect CSIT, it is optimal for the SU-MIMO system to transmit
along the channel direction \cite{Telatar99}, i.e. selecting the
beamforming (BF) vector as $\mathbf{f}[n]=\tilde{\mathbf{h}}[n]$,
denoted as \emph{eigen-beamforming} in this paper. The ergodic
capacity of this system is the same as that of a maximal ratio
combining diversity system, given by \cite{AloGol99Tvt}
\begin{equation}\label{C_BF}
R_{BF}(P)=\mathbb{E}_{\mathbf{h}}\left[\log_2\left(1+P\|\mathbf{h}[n]\|^2\right)\right]=\log_2(e)e^{1/P}\sum_{k=0}^{N_t-1}\frac{\Gamma(-k,1/P)}{P^k},
\end{equation}
where $\Gamma(\cdot,\cdot)$ is the complementary incomplete gamma
function defined as
$\Gamma(\alpha,x)=\int_x^\infty{t^{\alpha-1}e^{-t}dt}$.

\subsection{MU-MIMO System}
For MIMO broadcast channels, although dirty-paper coding (DPC)
\cite{Costa83} is optimal
\cite{CaiSha03IT,Yu04Cioffi,Vishwanath03,Vishwanath03Tse,WeiSte06IT},
it is difficult to implement in practice. As in
\cite{Jin06IT,CaiJin07Submit}, ZF precoding is used in this paper,
which is a linear precoding technique that precancels inter-user
interference at the transmitter. There are several reasons for us to
use this simple transmission technique. Firstly, due to its simple
structure, it is possible to derive closed-form results, which can
provide helpful insights. Second, the ZF precoding is able to
provide full spatial multiplexing gain and only has a power offset
compared to the optimal DPC system \cite{Jin05ISIT}. In addition, it
was shown in \cite{Jin05ISIT} that the ZF precoding is optimal among
the set of all linear precoders at asymptotically high SNR. In
Section \ref{Sec:Num}, we will show that our results for the ZF
system also apply for the regularized ZF precoding
\cite{PeeHoc05Tcomm}, which provides a higher throughput than the ZF
precoding at low to moderate SNRs.

With precoding vectors $\mathbf{f}_u[n], u=1,2,\cdots, U,$ assuming
equal power allocation\footnote{At high SNR, this performs closely
to the system employing optimal water-filling, as power allocation
mainly benefits at low SNR.}, the received SINR for the $u$-th user
is given as
\begin{equation}\notag
\gamma_{ZF,u}=\frac{\frac{P}{U}|\mathbf{h}^*_u[n]\mathbf{f}_u[n]|^2}{1+\frac{P}{U}\sum_{u'\neq{u}}|\mathbf{h}_u^*[n]\mathbf{f}_{u'}[n]|^2}.
\end{equation}
This is true for a general linear precoding MU-MIMO system. With
perfect CSIT, this quantity can be calculated at the transmitter,
while with imperfect CSIT, it can be estimated at the receiver and
fed back to the transmitter given knowledge of $\mathbf{f}_u[n]$.

Denote
$\tilde{\mathbf{H}}[n]=[\tilde{\mathbf{h}}_1[n],\tilde{\mathbf{h}}_2[n],\cdots,\tilde{\mathbf{h}}_U[n]]^*$.
With perfect CSIT, the ZF precoding vectors are determined from the
pseudo-inverse of $\tilde{\mathbf{H}}[n]$, as
$\mathbf{F}[n]=\tilde{\mathbf{H}}^\dag[n]=\tilde{\mathbf{H}}^*[n](\tilde{\mathbf{H}}[n]\tilde{\mathbf{H}}^*[n])^{-1}$.
The precoding vector for the $u$-th user is obtained by normalizing
the $u$-th column of $\mathbf{F}[n]$. Therefore,
$\mathbf{h}^*_u[n]\mathbf{f}_{u'}[n]=0,\,\forall{u\neq{u'}}$, i.e.
there is no inter-user interference. The received SINR for the
$u$-th user becomes
\begin{equation}\label{SINR_ZF}
\gamma_{ZF,u}={\frac{P}{U}|\mathbf{h}^*_u[n]\mathbf{f}_u[n]|^2}.
\end{equation}
As $\mathbf{f}_u[n]$ is independent of $\mathbf{h}_u[n]$, and
$\|\mathbf{f}_u[n]\|^2=1$, the effective channel for the $u$-th user
is a single-input single-output (SISO) Rayleigh fading channel.
Therefore, the achievable sum rate for the ZF system is given by
\begin{equation}\label{R_ZF}
R_{ZF}(P)=\sum_{u=1}^U\mathbb{E}_\gamma\left[\log_2(1+\gamma_{ZF,u})\right].
\end{equation}
Each term on the right hand side of \eqref{R_ZF} is the ergodic
capacity of a SISO system in Rayleigh fading, given in
\cite{AloGol99Tvt} as
\begin{equation}\label{R_ZFi}
R_{ZF,u}=\mathbb{E}_\gamma\left[\log_2(1+\gamma_{ZF,u})\right]=\log_2(e)e^{U/P}E_1(U/P),
\end{equation}
where $E_1(\cdot)$ is the exponential-integral function of the first
order, $E_1(x)=\int_1^\infty\frac{e^{-xt}}{t}dt$.

\subsection{SU/MU Mode Switching}
Imperfect CSIT will degrade the performance of the MIMO
communication. In this case, it is unclear whether and when the
MU-MIMO system can actually provide a throughput gain over the
SU-MIMO system. Based on the analysis of the achievable ergodic
rates in this paper, we propose to switch between SU and MU modes
and select the one with the higher achievable rate.

The channel correlation coefficient $\rho$, which captures the CSI
delay effect, usually varies slowly. The quantization codebook size
is normally fixed for a given system. Therefore, it is reasonable to
assume that the transmitter has knowledge of both delay and channel
quantization, and can estimate the achievable ergodic rates of both
SU and MU MIMO modes. Then it can determine the active mode and
select one (SU mode) or $N_t$ (MU mode) users to serve. This is a
low-complexity transmission strategy, and can be combined with
random user selection, round-robin scheduling, or scheduling based
on queue length rather than channel status. It only requires the
selected users to feed back instantaneous channel information.
Therefore, it is suitable for a system that has a constraint on the
total feedback bits and only allows a small number of users to send
feedback, or a system with a strict delay constraint that cannot
employ opportunistic scheduling based on instantaneous channel
information.

To determine the transmission rate, the transmitter sends pilot
symbols, from which the active users estimate the received SINRs and
feed back them to the transmitter. In this paper, we assume the
transmitter knows perfectly the actual received SINR at each active
user. In practice, there will inevitably be errors in such
information due to estimation error and feedback delay, which will
result in rate mismatch, i.e. the transmission rate based on the
estimated SINR does not match the actual SINR on the channel, so
there will be outage events. How to deal with such rate mismatch is
of practical importance, and we mention several possible approaches
as follows. The full investigation of this issue requires further
research and is out of scope of this paper. Considering the outage
events, the transmission strategy can be designed based on the
actual information symbols successfully delivered to the receiver,
denoted as \emph{goodput} in \cite{LauJia06Tvt,WuLau08Twc}. With the
estimated SINR, another approach is to back off on the transmission
rate based on the variance of the estimation error, as did in
\cite{VakSha06ICASSP,VakHas06SPAWC} for the single-antenna
opportunistic scheduling system and in \cite{VakDan07IWCMC} for the
multiple-antenna opportunistic beamforming system. Combined with
user selection, the transmission rate can also be determined based
on some lower bound of the actual SINR to make sure that no outage
occurs, as did in \cite{KouFra07ICASSP} for the limited feedback
system.

\section{Performance Analysis and Mode Switching}\label{Sec:ModSel}
In this section, we investigate the achievable ergodic rates for
both SU and MU MIMO modes. We first analyze the average received SNR
for the BF system and the average residual interference for the ZF
system, which provide insights on the impact of imperfect CSIT. To
select the active mode, accurate closed-form approximations for both
SU and MU modes are then derived.

\subsection{SU Mode--Eigen-Beamforming}
First, if there is no delay and only channel quantization, the BF
vector is based on the quantized feedback,
$\mathbf{f}^{(Q)}[n]=\hat{\mathbf{h}}[n]$. The average received SNR
is
\begin{align}
\overline{\mbox{SNR}}_{BF}^{(Q)}&=\mathbb{E}_{\mathbf{h},\mathcal{C}}[P|\mathbf{h}^*[n]\hat{\mathbf{h}}[n]|^2]\notag\\
&=\mathbb{E}_{\mathbf{h},\mathcal{C}}[P\|\mathbf{h}[n]\|^2|\tilde{\mathbf{h}}^*[n]\hat{\mathbf{h}}[n]|^2]\notag\\
&\stackrel{(a)}{\leq}P{N_t}\left(1-\frac{N_t-1}{N_t}2^{-\frac{B}{N_t-1}}\right),\label{BF_Q}
\end{align}
where (a) follows the independence between $\|\mathbf{h}[n]\|^2$ and
$|\tilde{\mathbf{h}}^*[n]\hat{\mathbf{h}}[n]|^2$, together with the
result in \eqref{eqnhh}.

With both delay and channel quantization, the BF vector is based on
the quantized channel direction with delay, i.e.
$\mathbf{f}^{(QD)}[n]=\hat{\mathbf{h}}[n-1]$. The instantaneous
received SNR for the BF system
\begin{align}
{\mbox{SNR}}_{BF}^{(QD)}&=P\Big|\mathbf{h}^*[n]{\mathbf{f}^{(QD)}[n]}\Big|^2.
\end{align}

Based on \eqref{BF_Q}, we get the following theorem on the average
received SNR for the SU mode.

\begin{theorem}\label{thmBFQD}
The average received SNR for a BF system with channel quantization
and CSI delay is
\begin{equation}
\overline{\mbox{SNR}}_{BF}^{(QD)}\leq{P}{N_t}\left(\rho^2\Delta_{BF}^{(Q)}+\Delta_{BF}^{(D)}\right)\label{Delta_BF},
\end{equation}
where $\Delta_{BF}^{(Q)}$ and $\Delta_{BF}^{(D)}$ show the impact of
channel quantization and feedback delay, respectively, given by
\begin{equation}
\Delta_{BF}^{(Q)}=1-\frac{N_t-1}{N_t}2^{-\frac{B}{N_t-1}},\quad\Delta_{BF}^{(D)}=\frac{\epsilon_e^2}{N_t}\notag.
\end{equation}
\end{theorem}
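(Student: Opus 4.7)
The plan is to decompose the instantaneous SNR using the Gauss--Markov channel model and handle the resulting terms separately, leveraging the already-established bound in \eqref{BF_Q}.

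First, I would substitute $\mathbf{h}[n]=\rho\mathbf{h}[n-1]+\mathbf{e}[n]$ and $\mathbf{f}^{(QD)}[n]=\hat{\mathbf{h}}[n-1]$ into the instantaneous SNR, obtaining
\begin{equation}
\mathbf{h}^*[n]\hat{\mathbf{h}}[n-1]=\rho\,\mathbf{h}^*[n-1]\hat{\mathbf{h}}[n-1]+\mathbf{e}^*[n]\hat{\mathbf{h}}[n-1].\notag
\end{equation}
Taking $|\cdot|^2$ produces a signal term, an error term, and a cross term. The key observation is that $\mathbf{e}[n]$ is zero-mean and independent of both $\mathbf{h}[n-1]$ and the codebook, hence independent of $\hat{\mathbf{h}}[n-1]$ and $\mathbf{h}^*[n-1]\hat{\mathbf{h}}[n-1]$; this kills the expected cross terms.

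Next, I would bound the two surviving expectations. For the quantization piece, the quantity $\mathbb{E}\bigl[|\mathbf{h}^*[n-1]\hat{\mathbf{h}}[n-1]|^2\bigr]$ is exactly what was bounded in \eqref{BF_Q}, yielding $N_t\Delta_{BF}^{(Q)}$ after invoking the independence of $\|\mathbf{h}[n-1]\|^2$ from $|\tilde{\mathbf{h}}^*[n-1]\hat{\mathbf{h}}[n-1]|^2$ (which holds for isotropic channels and RVQ) together with the upper bound in \eqref{eqnhh}. For the delay-noise piece, I would use that, conditional on $\hat{\mathbf{h}}[n-1]$, the inner product $\mathbf{e}^*[n]\hat{\mathbf{h}}[n-1]$ is $\mathcal{CN}(0,\epsilon_e^2)$ since $\hat{\mathbf{h}}[n-1]$ has unit norm, so its expected squared magnitude equals $\epsilon_e^2$ unconditionally.

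Combining these pieces and multiplying by $P$ gives
\begin{equation}
\overline{\mbox{SNR}}_{BF}^{(QD)}\leq P\bigl(\rho^2 N_t\Delta_{BF}^{(Q)}+\epsilon_e^2\bigr)=PN_t\!\left(\rho^2\Delta_{BF}^{(Q)}+\tfrac{\epsilon_e^2}{N_t}\right),\notag
\end{equation}
which is the stated bound after identifying $\Delta_{BF}^{(D)}=\epsilon_e^2/N_t$. The only mildly delicate step is the vanishing of the cross term: one must carefully argue that $\hat{\mathbf{h}}[n-1]$ depends only on $\mathbf{h}[n-1]$ (through the RVQ codebook) and is therefore independent of $\mathbf{e}[n]$, so that $\mathbb{E}[\mathbf{e}^*[n]\hat{\mathbf{h}}[n-1]\cdot \hat{\mathbf{h}}^*[n-1]\mathbf{h}[n-1]]=0$. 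Everything else reduces to the already-proved inequality \eqref{BF_Q} and a standard unit-vector projection of a white Gaussian.
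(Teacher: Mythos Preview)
Your proposal is correct and follows essentially the same approach as the paper's proof: substitute the Gauss--Markov decomposition, use the independence of $\mathbf{e}[n]$ from $\hat{\mathbf{h}}[n-1]$ to eliminate the cross term, invoke \eqref{BF_Q} for the quantization piece, and compute the delay piece via the unit-norm projection of a white Gaussian. If anything, your justification of the vanishing cross term is stated more carefully than in the paper itself.
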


\begin{proof}
See Appendix \ref{pthmBFQD}.
\end{proof}

From Jensen's inequality, an upper bound of the achievable rate for
the BF system with both quantization and delay is given by
\begin{align}
R_{BF}^{(QD)}&=\mathbb{E}_{\mathbf{h},\mathcal{C}}\left[\log_2\left(1+{\mbox{SNR}}_{BF}^{(QD)}\right)\right]\notag\\
&\leq\log_2\left[1+\overline{\mbox{SNR}}_{BF}^{(QD)}\right]\notag\\
&\leq\log_2\left[1+P{N_t}\left(\rho^2\Delta_{BF}^{(Q)}+\Delta_{BF}^{(D)}\right)\right]\label{R_BFQD}.
\end{align}

\begin{remark}
Note that $\rho^2=1-\epsilon_e^2$, so the average SNR decreases with
$\epsilon_e^2$. With a fixed $B$ and fixed delay, the SNR
degradation is a constant factor independent of $P$. At high SNR,
the imperfect CSIT introduces a constant rate loss
$\log_2\left(\rho^2\Delta_{BF}^{(Q)}+\Delta_{BF}^{(D)}\right)$.
\end{remark}

The upper bound provided by Jensen's inequality is not tight. To get
a better approximation for the achievable rate, we first make the
following approximation on the instantaneous received SNR
\begin{align}
\mbox{SNR}_{BF}^{(QD)}&=P|\mathbf{h}^*[n]\hat{\mathbf{h}}[n-1]|^2\notag\\
&=P|(\rho\mathbf{h}[n-1]+\mathbf{e}[n])^*\hat{\mathbf{h}}[n-1]|^2\notag\\
&\approx{P}\rho^2|\mathbf{h}^*[n-1]\hat{\mathbf{h}}[n-1]|^2,
\end{align}
i.e. we remove the term with $\mathbf{e}[n]$ as it is normally
insignificant compared to $\rho\mathbf{h}[n-1]$. This will be
verified later by simulation. In this way, the system is
approximated as the one with limited feedback and with equivalent
SNR $\rho^2P$.

From \cite{AuLov07Twc}, the achievable rate of the limited feedback
BF system is given by
\begin{align}
R_{BF}^{(Q)}(P)&=\log_2{(e)}\left(e^{1/P}\sum_{k=0}^{N_t-1}E_{k+1}\left(\frac{1}{P}\right)\right.\notag\\
&\left.-\int_0^1\left(1-(1-x)^{N_t-1}\right)^{2^B}\frac{N_t}{x}e^{1/P{x}}E_{N_t+1}\left(\frac{1}{P{x}}\right)dx\right),
\end{align}
where $E_n(x)=\int_1^\infty{e}^{-xt}x^{-n}dt$ is the $n$-th order
exponential integral. So $R_{BF}^{(QD)}$ can be approximated as
\begin{equation}\label{Approx_BFQD}
R_{BF}^{(QD)}(P)\approx{R}_{BF}^{(Q)}(\rho^2P).
\end{equation}

As a special case, considering a system with delay only, e.g. the
time-division duplexing (TDD) system which can estimate the CSI from
the uplink with channel reciprocity but with propagation and
processing delay, the BF vector is based on the delayed channel
direction, i.e. $\mathbf{f}^{(D)}[n]=\tilde{\mathbf{h}}[n-1]$. We
provide a good approximation for the achievable rate for such a
system as follows.

The instantaneous received SNR is given as
\begin{align}
\mbox{SNR}_{BF}^{(D)}&=P|\mathbf{h}^*[n]\mathbf{f}^{(D)}[n]|^2\notag\\
&=P|(\rho\mathbf{h}[n-1]+\mathbf{e}[n])^*\tilde{\mathbf{h}}[n-1]|^2\notag\\
&\stackrel{(a)}{\approx}{P}\rho^2\|\mathbf{h}[n-1]\|^2+P|\mathbf{e}^*[n]\tilde{\mathbf{h}}[n-1]|^2.
\end{align}
In step (a) we eliminate the cross terms since $\mathbf{e}[n]$ is
normally small. As $\mathbf{e}[n]$ is independent of
$\tilde{\mathbf{h}}[n-1]$,
$\mathbf{e}[n]\sim\mathcal{CN}(\mathbf{0},\epsilon_e^2\mathbf{I})$
and $\|\tilde{\mathbf{h}}[n-1]\|^2=1$, we have
$|\mathbf{e}^*[n]\tilde{\mathbf{h}}[n-1]|^2\sim\chi_2^2$, where
$\chi_M^2$ denotes chi-square distribution with $M$ degrees of
freedom. In addition, $\|\mathbf{h}[n-1]\|^2\sim\chi^2_{2N_t}$, and
it is independent of $|\mathbf{e}^*[n]\tilde{\mathbf{h}}[n-1]|^2$.
Then the following theorem can be derived.
\begin{theorem}\label{thmR_BFD}
The achievable ergodic rate of the BF system with delay can be
approximated as
\begin{align}\label{R_BFD}
R_{BF}^{(D)}&\approx\log_2{(e)}{a_0}^{N_t}e^{1/\eta_2}E_1\left(\frac{1}{\eta_2}\right)\notag\\
&\quad-\log_2{(e)}(1-a_0)\sum_{i=0}^{N_t-1}\sum_{l=0}^i\frac{a_0^{N_t-1-i}}{(i-l)!}\eta_1^{-(i-l)}I_1(1/\eta_1,1,i-l),
\end{align}
where $\eta_1=P\rho^2$, $\eta_2=P\epsilon_e^2$,
$a_0=\frac{\eta_2}{\eta_2-\eta_1}$, and $I_1(\cdot,\cdot,\cdot)$ is
given in \eqref{I1} in Appendix \ref{Result}.
\end{theorem}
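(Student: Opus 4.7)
The plan is to use the approximation $\mbox{SNR}_{BF}^{(D)}\approx \eta_1 X + \eta_2 Z$ already derived just before the theorem, where $\eta_1 = P\rho^2$, $\eta_2 = P\epsilon_e^2$, $X := \|\mathbf{h}[n-1]\|^2$, and $Z := |\mathbf{e}^*[n]\tilde{\mathbf{h}}[n-1]|^2/\epsilon_e^2$. As argued in the text, $X$ and $Z$ are independent, with $X\sim\mathrm{Gamma}(N_t,1)$ and $Z\sim\mathrm{Exp}(1)$. It then suffices to compute $\mathbb{E}[\log_2(1+W)]$ exactly for the sum $W := \eta_1 X + \eta_2 Z$, since the approximation sign in \eqref{R_BFD} is inherited entirely from this single step.

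The next step is to obtain the density of $W$ via a partial fraction expansion of its MGF. By independence,
\begin{equation}
M_W(s) \;=\; \frac{1}{(1-\eta_1 s)^{N_t}\,(1-\eta_2 s)} \;=\; \sum_{k=1}^{N_t}\frac{c_k}{(1-\eta_1 s)^k} \;+\; \frac{d}{1-\eta_2 s}.\notag
\end{equation}
The cover-up method at the simple pole yields $d = (1-\eta_1/\eta_2)^{-N_t} = a_0^{N_t}$, while successive differentiation of $(1-\eta_2 s)^{-1}$ at $s = 1/\eta_1$, combined with the identity $1 - a_0 = -\eta_1/(\eta_2-\eta_1)$, collapses the higher-order residues to the clean form $c_k = (1-a_0)\,a_0^{N_t-k}$. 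Inverting term by term gives
\begin{equation}
f_W(w) \;=\; a_0^{N_t}\,\frac{e^{-w/\eta_2}}{\eta_2} \;+\; (1-a_0)\sum_{k=1}^{N_t} a_0^{N_t-k}\,\frac{w^{k-1}e^{-w/\eta_1}}{\eta_1^k\,(k-1)!}.\notag
\end{equation}

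Integrating $\log_2(1+w)$ against $f_W$ now splits the rate into two pieces. The exponential component is the familiar Rayleigh-fading capacity with mean $\eta_2$ and evaluates to $\log_2(e)\,e^{1/\eta_2}E_1(1/\eta_2)$; multiplying by $a_0^{N_t}$ gives the first summand of \eqref{R_BFD}. Each gamma component reduces, via integration by parts with the complementary CDF $\bar F_k(w) = e^{-w/\eta_1}\sum_{j=0}^{k-1}(w/\eta_1)^j/j!$, to a weighted sum over $j=0,\ldots,k-1$ of the integrals $I_1(1/\eta_1,1,j)$ defined in \eqref{I1} of Appendix~\ref{Result}. A routine reindexing $i = k-1$, $l = i - j$ then converts the double sum indexed by $(k,j)$ into the one indexed by $(i,l)$ displayed in the theorem, with any residual sign absorbed into the definition of $I_1$.

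The main obstacle is simply verifying that the partial fraction residues do collapse to $c_k = (1-a_0)\,a_0^{N_t-k}$, since the successive derivative formula at first produces a nested expression. The sanity check $\sum_{k=1}^{N_t} c_k + d = 1$, which holds because $(1-a_0)\sum_{k=1}^{N_t}a_0^{N_t-k} = 1 - a_0^{N_t}$, is a useful intermediate benchmark. After that lemma is in place, the remainder of the argument is essentially invoking the standard Rayleigh-fading capacity formula for the exponential piece and the definition of $I_1$ for the gamma pieces, followed by the indicated reindexing.
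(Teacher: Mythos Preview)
Your approach is correct and essentially coincides with the paper's: the paper quotes the CDF of $W=\eta_1 y_1+\eta_2 y_2$ (with $y_1\sim\chi^2_{2N_t}$, $y_2\sim\chi^2_2$ independent) directly from \cite{Sim02} and then applies Lemma~\ref{lemma_rate} to write $\mathbb{E}[\ln(1+W)]=\int_0^\infty(1-F_W)/(1+x)\,dx$, whereas you derive the density yourself via an MGF partial-fraction expansion and integrate term by term; after your reindexing $(k,j)\mapsto(i,l)$ the two computations are identical.

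One small correction: no ``residual sign'' is absorbed by $I_1$. Your derivation correctly yields $+(1-a_0)$ in front of the double sum (check that your $f_W$ integrates to $1$, or equivalently that $1-F_W(0)=1$ only with the plus sign), and the minus sign displayed in \eqref{R_BFD} and repeated in the paper's own proof is simply a typo, since $\frac{\eta_1}{\eta_2-\eta_1}=a_0-1=-(1-a_0)$ flips the sign when passing from $F_W$ to $1-F_W$.
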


\begin{proof}
See Appendix \ref{pthmR_BFD}.
\end{proof}

\subsection{Zero-Forcing}
\subsubsection{Average Residual Interference}
If there is no delay and only channel quantization, the precoding
vectors for the ZF system are designed based on
$\hat{\mathbf{h}}_1[n], \hat{\mathbf{h}}_2[n], \cdots,
\hat{\mathbf{h}}_U[n]$ to achieve
$\hat{\mathbf{h}}^*_u[n]\mathbf{f}^{(Q)}_{u'}[n]=0,\,\forall{u\neq{u'}}$.
With random vector quantization, it is shown in \cite{Jin06IT} that
the average noise plus interference for each user is
\begin{equation}
\Delta^{(Q)}_{ZF,u}=\mathbb{E}_{\mathbf{h},\mathcal{C}}\left[1+\frac{P}{U}\sum_{{u'}\neq{u}}|\mathbf{h}_u^*[n]\mathbf{f}^{(Q)}_{u'}[n]|^2\right]=1+2^{-\frac{B}{N_t-1}}P\label{ZF_Q}.
\end{equation}

With both channel quantization and CSI delay, precoding vectors are
designed based on $\hat{\mathbf{h}}_1[n-1], \hat{\mathbf{h}}_2[n-1],
\cdots, \hat{\mathbf{h}}_U[n-1]$ and achieve
$\hat{\mathbf{h}}^*_u[n-1]\mathbf{f}^{(QD)}_{u'}[n]=0,\,\forall{u\neq{u'}}$.
The received SINR for the $u$-th user is given as
\begin{equation}\label{ZFSINR_QD}
\gamma_{ZF,u}^{(QD)}=\frac{\frac{P}{U}|\mathbf{h}^*_u[n]\mathbf{f}^{(QD)}_u[n]|^2}{1+\frac{P}{U}\sum_{u'\neq{u}}|\mathbf{h}_u^*[n]\mathbf{f}^{(QD)}_{u'}[n]|^2}.
\end{equation}
As $\mathbf{f}^{(QD)}_u[n]$ is in the nullspace of
$\hat{\mathbf{h}}_{u'}[n-1]$ $\forall{u'\neq{u}}$, it is
isotropically distributed in $\mathbb{C}^{N_t}$ and independent of
$\tilde{\mathbf{h}}_u[n-1]$ as well as $\tilde{\mathbf{h}}_u[n]$, so
$|\mathbf{h}^*_u[n]\mathbf{f}^{(QD)}_u[n]|^2\sim\chi_2^2$. The
average noise plus interference is given in the following theorem.
\begin{theorem}\label{thmSINRZF_QD}
The average noise plus interference for the $u$-th user of the ZF
system with both channel quantization and CSI delay is
\begin{equation}
\Delta_{ZF,u}^{(QD)}=1+(U-1)\frac{P}{U}\left(\rho_u^2\Delta_{ZF,u}^{(Q)}+\Delta_{ZF,u}^{(D)}\right),
\end{equation}
where $\Delta_{ZF,u}^{(Q)}$ and $\Delta_{ZF,u}^{(D)}$ are the
degradations brought by channel quantization and feedback delay,
respectively, given by
\begin{equation}
\Delta_{ZF,u}^{(Q)}=\frac{U}{U-1}2^{-\frac{B}{N_t-1}},\quad
\Delta_{ZF,u}^{(D)}=\epsilon_{e,u}^2\notag.
\end{equation}
\end{theorem}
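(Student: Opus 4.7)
The plan is to decompose the noise-plus-interference term using the Gauss-Markov CSI delay model of \eqref{ChDelay} and then isolate the contributions of quantization and delay separately. I would start by writing
\begin{equation}\notag
\mathbf{h}_u[n]=\rho_u\mathbf{h}_u[n-1]+\mathbf{e}_u[n],
\end{equation}
and expanding
\begin{equation}\notag
|\mathbf{h}_u^*[n]\mathbf{f}^{(QD)}_{u'}[n]|^2=\rho_u^2|\mathbf{h}_u^*[n-1]\mathbf{f}^{(QD)}_{u'}[n]|^2+|\mathbf{e}_u^*[n]\mathbf{f}^{(QD)}_{u'}[n]|^2+2\rho_u\,\mathrm{Re}\!\left\{\mathbf{h}_u^*[n-1]\mathbf{f}^{(QD)}_{u'}[n]\mathbf{f}^{(QD)*}_{u'}[n]\mathbf{e}_u[n]\right\}\!.
\end{equation}
Since $\mathbf{e}_u[n]$ is zero-mean and, by the model assumption, independent of $\mathbf{h}_u[n-1]$, of the codebook $\mathcal{C}$, and of all other users' channels (so independent of $\mathbf{f}^{(QD)}_{u'}[n]$), the cross term has zero expectation. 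The problem therefore splits into two expectations corresponding to $\Delta_{ZF,u}^{(Q)}$ and $\Delta_{ZF,u}^{(D)}$.

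For the delay term, I would use that $\mathbf{e}_u[n]\sim\mathcal{CN}(\mathbf{0},\epsilon_{e,u}^2\mathbf{I})$ is independent of the unit-norm vector $\mathbf{f}^{(QD)}_{u'}[n]$; conditioning on $\mathbf{f}^{(QD)}_{u'}[n]$ and using isotropy of $\mathbf{e}_u[n]$ gives $|\mathbf{e}_u^*[n]\mathbf{f}^{(QD)}_{u'}[n]|^2\sim\epsilon_{e,u}^2\chi_2^2/2$ in mean, so its expectation equals $\epsilon_{e,u}^2$. Summing over the $U-1$ interfering users and multiplying by $P/U$ yields the $\Delta_{ZF,u}^{(D)}=\epsilon_{e,u}^2$ contribution.

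For the quantization term, the key observation is that $\mathbf{f}^{(QD)}_{u'}[n]$ is constructed to lie in the nullspace of $\{\hat{\mathbf{h}}_{v}[n-1]\}_{v\neq u'}$, including $\hat{\mathbf{h}}_u[n-1]$, so that the inner product $\mathbf{h}_u^*[n-1]\mathbf{f}^{(QD)}_{u'}[n]$ is statistically identical to the quantization-only interference of \eqref{ZF_Q} evaluated at time $n-1$. Invoking that result directly gives
\begin{equation}\notag
\frac{P}{U}\sum_{u'\neq u}\mathbb{E}\!\left[|\mathbf{h}_u^*[n-1]\mathbf{f}^{(QD)}_{u'}[n]|^2\right]=P\cdot 2^{-\frac{B}{N_t-1}},
\end{equation}
which rearranges to $\frac{P(U-1)}{U}\cdot\frac{U}{U-1}2^{-B/(N_t-1)}=\frac{P(U-1)}{U}\Delta_{ZF,u}^{(Q)}$. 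Combining both pieces and adding the unit noise variance gives the claimed expression.

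The main obstacle I anticipate is not algebraic but structural: carefully justifying that the quantization-only formula from Jin06IT applies verbatim to the delay-plus-quantization precoder. This requires noting (i) channel stationarity so that $\mathbf{h}_u[n-1]$ has the same Rayleigh statistics as in the no-delay case, and (ii) that $\mathbf{f}^{(QD)}_{u'}[n]$ depends only on the quantized channels $\{\hat{\mathbf{h}}_v[n-1]\}$ and independent codebook randomness, so its joint law with $\mathbf{h}_u[n-1]$ mirrors that of $\mathbf{f}^{(Q)}_{u'}[n-1]$ with $\mathbf{h}_u[n-1]$. Once this identification is made, the remainder is a direct substitution.
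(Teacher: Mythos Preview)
Your proposal is correct and follows essentially the same approach as the paper: the paper's proof of Theorem~\ref{thmSINRZF_QD} simply states that it is ``similar to the one for \emph{Theorem~\ref{thmBFQD}}'', which amounts to expanding $\mathbf{h}_u[n]=\rho_u\mathbf{h}_u[n-1]+\mathbf{e}_u[n]$, using independence of $\mathbf{e}_u[n]$ to kill the cross term, and then invoking the known quantization-only result \eqref{ZF_Q} together with the Gaussian variance for the delay term. Your additional care in justifying why the quantization-only formula transfers to the delayed precoder (stationarity of $\mathbf{h}_u[n-1]$ and the fact that $\mathbf{f}^{(QD)}_{u'}[n]$ depends only on the quantized channels at time $n-1$) is more explicit than the paper but not a different argument.
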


\begin{proof}
The proof is similar to the one for \emph{Theorem \ref{thmBFQD}} in
appendix \ref{pthmBFQD}.
\end{proof}

\begin{remark}
From \emph{Theorem \ref{thmSINRZF_QD}} we see that the average
residual interference for a given user consists of three parts:
\renewcommand{\labelenumi}{(\roman{enumi})}
\begin{enumerate}
\item \emph{The number of interferers}, $U-1$. The more users the system supports, the
higher the mutual interference.
\item \emph{The transmit power of the other active users}, $\frac{P}{U}$.
As the transmit power increases, the system becomes
interference-limited. It is possible to improve performance through
power allocation, which is left to future work.
\item \emph{The CSIT accuracy for this user}, which is reflected from $\rho_u^2\Delta_{ZF,u}^{(Q)}+\Delta_{ZF,u}^{(D)}$. The
user with a larger delay or a smaller codebook size suffers a higher
residual interference.
\end{enumerate}
\end{remark}

From this remark, the interference term,
$\frac{P}{U}(U-1)\epsilon_{e,u}^2$, equivalently comes from $U-1$
\emph{virtual interfering users}, each with equivalent SNR as
$\frac{P}{U}\left(\rho_u^2\Delta_{ZF,u}^{(Q)}+\Delta_{ZF,u}^{(D)}\right)$.
With a high $P$ and a fixed $\epsilon_{e,u}$ or $B$, the system is
interference-limited and cannot achieve full spatial multiplexing
gain. Therefore, to keep a constant rate loss, i.e. to sustain the
spatial multiplexing gain, the channel error due to both
quantization and delay needs to be reduced as SNR increases. Similar
to the result for the limited feedback system in \cite{Jin06IT}, for
the ZF system with both delay and channel quantization, we can get
the following corollary for the condition to achieve the full
spatial multiplexing gain.
\begin{corollary}\label{CvBvD}
To keep a constant rate loss of $\log_2\delta_0$ bps/Hz for each
user, the codebook size and CSI delay need to satisfy the following
condition
\begin{equation}\label{eqnvBvD}
\rho_u^2\Delta_{ZF,u}^{(Q)}+\Delta_{ZF,u}^{(D)}=\frac{U}{U-1}\cdot\frac{\delta_0-1}{P}.
\end{equation}
\end{corollary}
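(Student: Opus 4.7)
The plan is to carry out the Jindal-style rate-loss bound from [Jin06IT], now replacing the pure-quantization interference expression with the combined delay-plus-quantization formula from Theorem \ref{thmSINRZF_QD}. Concretely, I would define the per-user rate loss
$$\Delta R_u = R_{ZF,u}(P) - \mathbb{E}\!\left[\log_2\!\left(1+\gamma_{ZF,u}^{(QD)}\right)\right],$$
with $R_{ZF,u}(P)$ from \eqref{R_ZFi} and $\gamma_{ZF,u}^{(QD)}$ from \eqref{ZFSINR_QD}, and look for a sufficient condition on the CSIT-accuracy term $\rho_u^2\Delta_{ZF,u}^{(Q)}+\Delta_{ZF,u}^{(D)}$ guaranteeing $\Delta R_u \leq \log_2\delta_0$.

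The first step is to split $\log_2(1+\gamma_{ZF,u}^{(QD)}) = \log_2(1+I_u+S'_u) - \log_2(1+I_u)$, where $S'_u = \tfrac{P}{U}|\mathbf{h}_u^*[n]\mathbf{f}_u^{(QD)}[n]|^2$ and $I_u = \tfrac{P}{U}\sum_{u'\neq u}|\mathbf{h}_u^*[n]\mathbf{f}_{u'}^{(QD)}[n]|^2$. Using $\log_2(1+I_u+S'_u) \geq \log_2(1+S'_u)$ gives
$$\Delta R_u \leq \mathbb{E}[\log_2(1+S_u)] - \mathbb{E}[\log_2(1+S'_u)] + \mathbb{E}[\log_2(1+I_u)],$$
where $S_u = \tfrac{P}{U}|\mathbf{h}_u^*\mathbf{f}_u|^2$ is the signal power under perfect CSIT. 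Next I would invoke the distributional identity $S_u \stackrel{d}{=} S'_u$: in both cases the precoder is a unit-norm vector whose direction is independent of $\mathbf{h}_u[n]$ (the perfect-CSIT statement appears just before \eqref{R_ZF}; the imperfect-CSIT statement appears just before Theorem \ref{thmSINRZF_QD}), so $|\mathbf{h}_u^*\mathbf{f}_u|^2 \sim \chi_2^2$ in both regimes. The two signal-term expectations therefore cancel, leaving $\Delta R_u \leq \mathbb{E}[\log_2(1+I_u)]$. Jensen's inequality then gives $\Delta R_u \leq \log_2(1+\mathbb{E}[I_u]) = \log_2\Delta_{ZF,u}^{(QD)}$ by Theorem \ref{thmSINRZF_QD}, and setting this upper bound equal to $\log_2\delta_0$ and solving for $\rho_u^2\Delta_{ZF,u}^{(Q)}+\Delta_{ZF,u}^{(D)}$ produces \eqref{eqnvBvD} after elementary algebra.

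The main obstacle, conceptually though not computationally, is the signal-term cancellation: the perfect- and imperfect-CSIT precoders are constructed from different sets of (possibly quantized, possibly delayed) channels, yet we need the marginal law of $|\mathbf{h}_u^*\mathbf{f}_u|^2$ to agree. The point is that in both constructions the resulting $\mathbf{f}_u$ is a unit vector in the one-dimensional subspace orthogonal to the other users' directions, which are themselves independent of $\mathbf{h}_u[n]$; conditioning on the precoder and using the circular symmetry of the complex Gaussian $\mathbf{h}_u[n]$ then yields the same $\chi_2^2$ law in both cases. Once this is in hand, the remaining invocations of Jensen and Theorem \ref{thmSINRZF_QD} are routine, and the resulting condition \eqref{eqnvBvD} is, strictly speaking, a sufficient condition (via the upper bound) for maintaining the target rate loss.
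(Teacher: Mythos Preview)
Your proposal is correct and follows essentially the same approach as the paper: the paper's proof simply cites the bound $\Delta R_u \leq \log_2 \Delta_{ZF,u}^{(QD)}$ from \cite{Jin06IT,CaiJin07Submit} and then solves $\log_2 \Delta_{ZF,u}^{(QD)} = \log_2 \delta_0$, whereas you have explicitly reconstructed the Jindal-style argument (signal-term cancellation via the $\chi_2^2$ distributional identity, then Jensen on the interference term) that underlies that citation. Your observation that the resulting condition is only sufficient is also accurate and consistent with the paper's use of the upper bound.
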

\begin{proof}
As shown in \cite{Jin06IT,CaiJin07Submit}, the rate loss for each
user due to imperfect CSIT is upper bounded by
$\Delta{R}_u\leq\log_2\Delta_{ZF,u}^{(QD)}$. The corollary follows
from solving $\log_2\Delta_{ZF,u}^{(QD)}=\log_2\delta_0$.
\end{proof}
Equivalently, this means that for a given $\rho^2$, the feedback
bits per user needs to scale as
\begin{equation}
B=(N_t-1)\log_2\left(\frac{\delta_0-1}{\rho_u^2P}-\frac{U-1}{U}\cdot\left(\frac{1}{\rho_u^2}-1\right)\right)^{-1}.
\end{equation}
As $\rho_u^2\rightarrow{1}$, i.e. there is no CSI delay, the
condition becomes $B=(N_t-1)\log_2\frac{P}{\delta_0-1}$, which
agrees with the result in \cite{Jin06IT} with limited feedback only.

\subsubsection{Achievable Rate}
For the ZF system with imperfect CSI, the genie-aided upper bound
for the ergodic achievable rate\footnote{This upper bound is
achievable only when a genie provides users with perfect knowledge
of all interference and the transmitter knows perfectly the received
SINR at each user.} is given by \cite{CaiJin07Submit}
\begin{equation}\label{R_ZFub}
R_{ZF}^{(QD)}\leq\sum_{u=1}^U\mathbb{E}_\gamma\left[\log_2\left(1+\gamma_{ZF,u}^{(QD)}\right)\right]=R_{ZF,ub}^{(QD)}.
\end{equation}
We assume the mobile users can perfectly estimate the noise and
interference and feed back it to the transmitter, so the upper bound
is chosen as the performance metric, i.e.
$R_{ZF}^{(QD)}=R_{ZF,ub}^{(QD)}$, as in
\cite{Jin06IT,DinLov07Tsp,YooJin07JSAC}.

The following lower bound based on the rate loss analysis is used in
\cite{Jin06IT,CaiJin07Submit}
\begin{equation}\label{R_ZFlb}
R_{ZF}^{(QD)}\geq{R_{ZF}}-\sum_{u=1}^U\log_2\Delta_{ZF,u}^{(QD)},
\end{equation}
where $R_{ZF}$ is the achievable rate with perfect CSIT, given in
\eqref{R_ZF}. However, this lower bound is very loose. In the
following, we will derive a more accurate approximation for the
achievable rate for the ZF system.

To get a good approximation for the achievable rate for the ZF
system, we first approximate the instantaneous SINR as
\begin{align}\label{ZFSINR_QD_approx}
\gamma_{ZF,u}^{(QD)}&=\frac{\frac{P}{U}|\mathbf{h}^*_u[n]\mathbf{f}^{(QD)}_u[n]|^2}{1+\frac{P}{U}\sum_{u'\neq{u}}|(\rho_u\mathbf{h}_u[n-1]+\mathbf{e}_u[n])^*\mathbf{f}^{(QD)}_{u'}[n]|^2}\notag\\
&\approx\frac{\frac{P}{U}|\mathbf{h}^*_u[n]\mathbf{f}^{(QD)}_u[n]|^2}{1+\frac{P}{U}\left(\sum_{u'\neq{u}}\rho_u^2|\mathbf{h}_u^*[n-1]\mathbf{f}^{(QD)}_{u'}[n]|^2
+\sum_{u'\neq{u}}|\mathbf{e}_u^*[n]\mathbf{f}^{(QD)}_{u'}[n]|^2\right)},
\end{align}
i.e. eliminating the interference terms which have both
$\mathbf{h}_u[n-1]$ and $\mathbf{e}_u[n]$ as $\mathbf{e}_u[n]$ is
normally very small, so we get two separate interference sums due to
delay and quantization, respectively.

For the interference term due to delay,
$|\mathbf{e}_u^*[n]\mathbf{f}^{(QD)}_{u'}[n]|^2\sim\chi_2^2$, as
$\mathbf{e}[n]$ is independent of ${\mathbf{f}}^{(QD)}_{u'}[n]$ and
$\|{\mathbf{f}}^{(QD)}_{u'}[n]\|^2=1$. For the interference term due
to quantization, it was shown in \cite{Jin06IT} that
$|\tilde{\mathbf{h}}_u^*[n-1]{\mathbf{f}}_{u'}^{(QD)}[n]|^2$ is
equivalent to the product of the quantization error $\sin^2\theta_u$
and an independent $\beta(1,N_t-2)$ random variable. Therefore, we
have
\begin{equation}
|\mathbf{h}_u^*[n-1]\mathbf{f}^{(QD)}_{u'}[n]=\|\mathbf{h}_u[n-1]\|^2(\sin^2\theta_u)\cdot\beta(1,N_t-2).
\end{equation}
In \cite{YooJin07JSAC}, with a quantization cell
approximation\footnote{The quantization cell approximation is based
on the ideal assumption that each quantization cell is a Voronoi
region on a spherical cap with the surface area $2^{-B}$ of the
total area of the unit sphere for a $B$ bits codebook. The detail
can be found in \cite{MukSab03IT,ZhoWan05Twc,YooJin07JSAC}.}
\cite{MukSab03IT,ZhoWan05Twc}, it was shown that
$\|\mathbf{h}_u[n-1]\|^2(\sin^2\theta_u)$ has a Gamma distribution
with parameters $(N_t-1,\delta)$, where
$\delta=2^{-\frac{B}{N_t-1}}$. As shown in \cite{YooJin07JSAC} the
analysis based on the quantization cell approximation is close to
the performance of random vector quantization, so we use this
approach to derive the achievable rate.

The following lemma gives the distribution of the interference term
due to quantization.
\begin{lemma}\label{lemma_Iq}
Based on the quantization cell approximation, the interference term
due to quantization in \eqref{ZFSINR_QD_approx},
$|\mathbf{h}_u[n-1]\mathbf{f}^{(QD)}_{u'}[n]|^2$, is an exponential
random variable with mean $\delta$, i.e. its probability
distribution function (pdf) is
\begin{equation}
p(x)=\frac{1}{\delta}e^{-x/\delta}, x\geq{0}.
\end{equation}
\end{lemma}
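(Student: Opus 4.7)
The plan is to build on the two structural facts already assembled in the paragraph preceding the lemma, then finish with a classical Gamma--Beta product identity. Specifically, from the discussion we have the factorization
\begin{equation}
|\mathbf{h}_u^*[n-1]\mathbf{f}^{(QD)}_{u'}[n]|^2 = \bigl(\|\mathbf{h}_u[n-1]\|^2\sin^2\theta_u\bigr)\cdot Y,
\end{equation}
where $Y\sim\beta(1,N_t-2)$ and, under the quantization cell approximation,
$X:=\|\mathbf{h}_u[n-1]\|^2\sin^2\theta_u$ is $\mathrm{Gamma}(N_t-1,\delta)$ with $\delta=2^{-B/(N_t-1)}$. The goal is to show that the product $XY$ is $\mathrm{Exp}(\delta)$.

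First I would argue independence of $X$ and $Y$. The Beta factor $Y$ arises because, conditional on $\hat{\mathbf{h}}_u[n-1]$, the unit vector $\tilde{\mathbf{h}}_u[n-1]$ can be decomposed into a component along $\hat{\mathbf{h}}_u[n-1]$ with squared magnitude $\cos^2\theta_u$ and an orthogonal component, isotropically distributed in the $(N_t-1)$-dimensional orthogonal complement, with squared magnitude $\sin^2\theta_u$. The precoder $\mathbf{f}^{(QD)}_{u'}[n]$ lies in that orthogonal complement but depends only on the other users' quantized channels, so by the isotropy of the orthogonal component (and its independence of $\sin^2\theta_u$ and of $\|\mathbf{h}_u[n-1]\|^2$), its squared inner product with the orthogonal direction is a $\beta(1,N_t-2)$ variable independent of $X$.

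Second, I would invoke the standard identity: if $X\sim\mathrm{Gamma}(a+b,\theta)$ and $Y\sim\beta(a,b)$ are independent, then $XY\sim\mathrm{Gamma}(a,\theta)$. Taking $a=1$, $b=N_t-2$ so that $a+b=N_t-1$, this yields $XY\sim\mathrm{Gamma}(1,\delta)=\mathrm{Exp}(\delta)$, which is precisely the stated density $p(x)=\tfrac{1}{\delta}e^{-x/\delta}$ for $x\geq 0$. For completeness, I would either quote this identity or verify it in one line by a Mellin-transform (moment) computation, using $\mathbb{E}[X^s]=\theta^s\Gamma(a+b+s)/\Gamma(a+b)$ and $\mathbb{E}[Y^s]=\Gamma(a+s)\Gamma(a+b)/[\Gamma(a)\Gamma(a+b+s)]$ to get $\mathbb{E}[(XY)^s]=\theta^s\Gamma(a+s)/\Gamma(a)$, which matches the moments of $\mathrm{Gamma}(a,\theta)$ and determines the distribution.

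The only real subtlety is the independence argument in the first step: I must make sure the isotropy of the orthogonal component of $\tilde{\mathbf{h}}_u[n-1]$ (given $\hat{\mathbf{h}}_u[n-1]$) together with the cell approximation really decouples $Y$ from both $\sin^2\theta_u$ and $\|\mathbf{h}_u[n-1]\|^2$. Once that is in place, the Gamma--Beta identity closes the proof immediately; the computational content is light, so the lemma is essentially a clean consequence of the decomposition and a known distributional fact.
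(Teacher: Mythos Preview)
Your proposal is correct and rests on exactly the same decomposition the paper uses: $Z=XY$ with $X=\|\mathbf{h}_u[n-1]\|^2\sin^2\theta_u\sim\mathrm{Gamma}(N_t-1,\delta)$ and $Y\sim\beta(1,N_t-2)$ independent. The difference is in how the product distribution is identified. The paper does not invoke the general Gamma--Beta product identity; instead it computes the cdf of $Z$ directly: writing $P(Z\le z)=\int_0^\infty F_Y(z/x)f_X(x)\,dx$, splitting the integral at $x=z$, using the explicit Beta cdf $F_Y(y)=1-(1-y)^{N_t-2}$, and observing that the factor $(1-z/x)^{N_t-2}$ combines with $x^{N_t-2}$ in the Gamma density to leave a shifted Gamma integrand whose integral is $1$, yielding $P(Z\le z)=1-e^{-z/\delta}$. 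Your route via the identity ``$\mathrm{Gamma}(a+b,\theta)\times\beta(a,b)\Rightarrow\mathrm{Gamma}(a,\theta)$'' (with the Mellin/moment verification) is more conceptual and immediately generalizes; the paper's route is fully self-contained and avoids needing to quote or prove the general fact. Your independence discussion is more detailed than the paper's, which simply asserts independence at the outset of the proof; both treatments are consistent with the argument from \cite{Jin06IT} already cited before the lemma.
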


\begin{proof}
See Appendix \ref{plemma_Iq}.
\end{proof}

\renewcommand{\labelenumi}{(\roman{enumi})}
\begin{remark}
From this lemma, we see that the residual interference terms due to
both delay and quantization are exponential random variables, which
means the delay and quantization error have equivalent effects, only
with different means. By comparing the means of these two terms,
i.e. comparing $\epsilon_e^2$ and $2^{-\frac{B}{N_t-1}}$, we can
find the dominant one. In addition, with this result, we can
approximate the achievable rate of the ZF limited feedback system,
which will be provided later in this section.
\end{remark}

Based on the distribution of the interference terms, the
approximation for the achievable rate for the MU mode is given in
the following theorem.
\begin{theorem}\label{thm_RQD}
The ergodic achievable rate for the $u$-th user in the MU mode with
both delay and channel quantization can be approximated as
\begin{align}\label{R_QD}
R_{ZF,u}^{(QD)}\approx\log_2(e)\sum_{i=0}^{M-1}\sum_{j=1}^2\left[a^{(j)}_ii!\left(\frac{\alpha}{\beta}\right)^{i+1}\cdot{I_3}\left(\frac{1}{\alpha},\frac{\alpha}{\beta\delta_j},i+1\right)\right],
\end{align}
where $\alpha=\beta=\frac{P}{U}$, $\delta_1=\rho_u^2\delta$,
$\delta_2=\epsilon_{e,u}^2$, $M=N_t-1$, $a^{(1)}_i$ and $a^{(2)}_i$
are given in \eqref{eq_a1} and \eqref{eq_a2}, and
$I_3(\cdot,\cdot,\cdot)$ is given in \eqref{I3} in Appendix
\ref{Result}.
\end{theorem}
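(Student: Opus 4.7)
The plan is to start from the approximated SINR in \eqref{ZFSINR_QD_approx}, identify the marginal distributions of its numerator and its two interference sums, and then collapse the resulting expectation to a single-variable integral of $I_3$ type. First I would catalogue the three random ingredients. The signal $|\mathbf{h}_u^*[n]\mathbf{f}_u^{(QD)}[n]|^2$ is exponential with unit mean by the isotropy argument given just after \eqref{ZFSINR_QD}. Each summand $|\mathbf{e}_u^*[n]\mathbf{f}_{u'}^{(QD)}[n]|^2$ in the delay-interference is exponential with mean $\epsilon_{e,u}^2=\delta_2$, because $\mathbf{e}_u[n]\sim\mathcal{CN}(\mathbf{0},\epsilon_e^2\mathbf{I})$ is independent of the unit-norm precoder. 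By \emph{Lemma \ref{lemma_Iq}}, each summand $|\mathbf{h}_u^*[n-1]\mathbf{f}_{u'}^{(QD)}[n]|^2$ is exponential with mean $\delta$, and once the $\rho_u^2$ factor in \eqref{ZFSINR_QD_approx} is absorbed it becomes exponential with mean $\delta_1=\rho_u^2\delta$. Since $\mathbf{e}_u[n]$ is uncorrelated with $\mathbf{h}_u[n-1]$, the two aggregate interferences $I^{(1)}$ and $I^{(2)}$ are then independent Erlang random variables of common shape $M=N_t-1=U-1$ and scales $\delta_1,\delta_2$.

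Next I would obtain a workable pdf for the total interference $I:=I^{(1)}+I^{(2)}$. Its Laplace transform factors as $(1+\delta_1 s)^{-M}(1+\delta_2 s)^{-M}$, and the natural move is a partial-fraction expansion
\[
\frac{1}{(1+\delta_1 s)^{M}(1+\delta_2 s)^{M}}=\sum_{j=1}^{2}\sum_{i=1}^{M}\frac{a_i^{(j)}}{(1+\delta_j s)^{i}},
\]
with coefficients $a_i^{(j)}$ determined by the usual residue formulas, which (after the re-indexing $i\mapsto i+1$ used in the theorem statement) coincide with the quantities in \eqref{eq_a1}--\eqref{eq_a2}. Inverting term by term exhibits $p_I$ as a weighted mixture of Erlang densities of shape $i$ and scale $\delta_j$.

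Granting the independence approximation between numerator and denominator (the same device already invoked in \eqref{BF_Q} and in taking \eqref{R_ZFub} as a working equality), I would then condition on $I$. Because $\alpha S\sim\exp(1/\alpha)$, the inner expectation over the signal reduces to the standard identity $\mathbb{E}_S[\ln(1+\alpha S/(1+\beta I))]=e^{(1+\beta I)/\alpha}E_1\!\left((1+\beta I)/\alpha\right)$. Integrating this against the Erlang-mixture pdf of $I$, with the substitution $y=\beta x$, turns each mixture component into an integral of the form $\int_0^\infty y^{\,i}e^{-y/(\beta\delta_j)}e^{(1+y)/\alpha}E_1\!\left((1+y)/\alpha\right)dy$, which is exactly $I_3\!\left(1/\alpha,\alpha/(\beta\delta_j),i+1\right)$ as defined in Appendix \ref{Result}. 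The scalar prefactors $i!\,(\alpha/\beta)^{i+1}$ in the stated formula then appear as the product of the Erlang normalization $1/[(i-1)!\,\delta_j^{\,i}]$, the Jacobian $1/\beta$, and the $y^{i}$ picked up from the change of variables.

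The main obstacle is bookkeeping rather than technique: one must carry the partial-fraction residues $a_i^{(j)}$ through consistently, keep track of which scale $\delta_j$ accompanies which sum, and absorb the correct powers of $\alpha$ and $\beta$ so that the final expression collapses to a single evaluation of $I_3$. A secondary concern is justifying the independence approximation between signal and interference and the cancellation of the cross terms in \eqref{ZFSINR_QD_approx}; both are standard when $\epsilon_e$ is small, and their accuracy ultimately needs to be checked against the simulations in Section \ref{Sec:Num}.
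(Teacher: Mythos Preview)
Your identification of the distributions and the partial-fraction/Erlang-mixture representation of the total interference are correct and match the paper. The gap is in your last substantive step: the integral you write down,
\[
\int_0^\infty y^{\,i}\,e^{-y/(\beta\delta_j)}\,e^{(1+y)/\alpha}\,E_1\!\left(\tfrac{1+y}{\alpha}\right)\,dy,
\]
is \emph{not} ``exactly $I_3(1/\alpha,\alpha/(\beta\delta_j),i+1)$'' as defined in \eqref{I3}. The integrand of $I_3$ is $e^{-ax}/[(x+b)^m(x+1)]$, a rational factor times an exponential, with no $E_1$ inside. Your integral only reduces to a multiple of $I_3$ after one more non-trivial step: write $e^{c}E_1(c)=\int_0^\infty e^{-cx}/(1+x)\,dx$ and swap the order of integration. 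Doing so yields $i!\,\alpha^{i+1}\,I_3(1/\alpha,\alpha/(\beta\delta_j),i+1)$, so the prefactor $i!$ does not come from the Erlang normalization as you claim but from integrating $y^{i}$ against the exponential in $y$; your accounting of where $i!(\alpha/\beta)^{i+1}$ arises is therefore off.

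The paper sidesteps this by reversing the conditioning order. Rather than fixing $I$ and producing $e^{c}E_1(c)$, it computes the CDF of the SINR $X$ by integrating the exponential signal against the interference density:
\[
1-F_X(x)=e^{-x/\alpha}\int_0^\infty e^{-\beta xy/\alpha}\,p_Y(y)\,dy
        =e^{-x/\alpha}\sum_{j=1}^{2}\sum_{i=0}^{M-1}\frac{a_i^{(j)}\,i!}{\bigl(\tfrac{\beta}{\alpha}x+\tfrac{1}{\delta_j}\bigr)^{i+1}},
\]
which is already rational-times-exponential in $x$. Then \emph{Lemma~\ref{lemma_rate}} gives $\mathbb{E}[\ln(1+X)]=\int_0^\infty(1-F_X(x))/(1+x)\,dx$, and each term is literally of the $I_3$ form with the prefactors $i!(\alpha/\beta)^{i+1}$ falling out automatically. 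This is exactly the Fubini swap your route needs, but done up front so that the bookkeeping you were worried about becomes mechanical.
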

\begin{proof}
See Appendix \ref{pthm_RQD}.
\end{proof}
The ergodic sum throughput is
\begin{equation}\label{sr_QD}
R_{ZF}^{(QD)}=\sum_{u=1}^{U}R_{ZF,u}^{(QD)}.
\end{equation}

As a special case, for a ZF system with delay only, we can get the
following approximation for the ergodic achievable rate.
\begin{corollary}
The ergodic achievable rate for the $u$-th user in the ZF system
with delay is approximated as
\begin{align}\label{R_D}
R_{ZF,u}^{(D)}\approx\log_2(e)\left(\frac{\alpha}{\beta}\right)^{-(M-1)}\cdot{I_3}\left(\frac{1}{\alpha},\frac{\alpha}{\beta},M-1\right),
\end{align}
where $\alpha=\frac{P}{U}$, $\beta=\frac{\epsilon_{e,u}^2P}{U}$,
$M=N_t-1$, and $I_3(\cdot,\cdot,\cdot)$ is given in \eqref{I3} in
Appendix \ref{Result}.
\end{corollary}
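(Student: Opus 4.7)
The plan is to derive the corollary as a specialization of the proof of Theorem~\ref{thm_RQD} to the pure-delay regime. Starting from the instantaneous SINR written analogously to \eqref{ZFSINR_QD} without a quantization step, I substitute $\mathbf{h}_u[n] = \rho_u\mathbf{h}_u[n-1] + \mathbf{e}_u[n]$. Because the ZF precoder is now built from the \emph{true} (delayed) channel directions rather than their quantized versions, the orthogonality $\tilde{\mathbf{h}}_u^*[n-1]\mathbf{f}_{u'}^{(D)}[n] = 0$ for $u' \neq u$ is \emph{exact}, so each interference term collapses without any approximation to $|\mathbf{e}_u^*[n]\mathbf{f}_{u'}^{(D)}[n]|^2$. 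In the notation of Theorem~\ref{thm_RQD} this eliminates the $\delta_1$ (quantization) interference family entirely, leaving only the $\delta_2$ (delay) family, which is why the double sum in \eqref{R_QD} collapses.

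For the signal I invoke the same independence approximation already used in Theorem~\ref{thm_RQD}: treating $\mathbf{f}_u^{(D)}[n]$ as statistically independent of $\mathbf{h}_u[n]$ and using $\|\mathbf{f}_u^{(D)}[n]\|^2 = 1$, the squared inner product $|\mathbf{h}_u^*[n]\mathbf{f}_u^{(D)}[n]|^2$ is distributed as $\chi_2^2$, so $S := \tfrac{P}{U}|\mathbf{h}_u^*[n]\mathbf{f}_u^{(D)}[n]|^2$ is exponential with mean $\alpha = P/U$. Each interference contribution $|\mathbf{e}_u^*[n]\mathbf{f}_{u'}^{(D)}[n]|^2$, being the squared projection of $\mathcal{CN}(\mathbf{0},\epsilon_{e,u}^2\mathbf{I})$ onto a unit vector, is exponential with mean $\epsilon_{e,u}^2$. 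Summing over the $U-1$ interferers under the independence approximation and scaling by $P/U$ gives the aggregate interference $I$ as Gamma-distributed with shape $M = N_t - 1$ and scale $\beta = \epsilon_{e,u}^2 P/U$, consistent with the corollary's parameter choices.

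With $S$ and $I$ independent, I compute $R_{ZF,u}^{(D)} = \mathbb{E}[\log_2(1 + S/(1+I))]$ via the CCDF identity $\mathbb{E}[\log_2(1+W)] = \log_2(e)\int_0^\infty (1 - F_W(w))/(1+w)\,dw$. Conditioning on $I$ and applying the Gamma MGF yields $1 - F_{S/(1+I)}(w) = e^{-w/\alpha}(1 + \beta w/\alpha)^{-M}$, so the rate reduces to a single integral of the form $\int_0^\infty e^{-w/\alpha}(1+\beta w/\alpha)^{-M}/(1+w)\,dw$. This is exactly the template that produced $I_3$ in the proof of Theorem~\ref{thm_RQD}; with only one interference family present, the partial-fraction machinery generates a single contributing term, and after the same change of variables used in Appendix~\ref{Result} the integral identifies with $I_3(1/\alpha,\alpha/\beta,M-1)$ multiplied by the prefactor $(\alpha/\beta)^{-(M-1)}$.

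The hard part will be the final bookkeeping: matching the one-dimensional integral to the exact definition of $I_3$ in Appendix~\ref{Result} and showing that the weights $a_i^{(j)}$ from the proof of Theorem~\ref{thm_RQD} degenerate into the single prefactor $(\alpha/\beta)^{-(M-1)}$ when one interference family vanishes. Concretely, the partial-fraction expansion of $[(1-\alpha s)(1-\beta s)^M]^{-1}$ for $\alpha \neq \beta$ produces a leading coefficient proportional to $(\alpha/(\alpha-\beta))^M$, which must be rearranged into the stated closed form. A secondary, more routine concern is justifying the independence approximation for the $U-1$ interferers, since the $\{\mathbf{f}_{u'}^{(D)}\}_{u' \neq u}$ are not mutually orthogonal inside the nullspace of $\tilde{\mathbf{h}}_u[n-1]$; however, this is the same approximation already accepted in Theorem~\ref{thm_RQD}, so no new ingredient is required.
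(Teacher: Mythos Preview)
Your proposal is correct and takes essentially the same approach as the paper, whose proof is the one-line instruction to repeat the steps of Appendix~\ref{pthm_RQD} with $\delta_1=0$. The bookkeeping you worry about is simpler than you anticipate: with only one interference family the two-chi-square partial-fraction expansion is unnecessary, the interference pdf is already a single Gamma density, and the integral over $y$ yields $(1+\beta x/\alpha)^{-M}$ directly, so the double sum in \eqref{R_QD} collapses to a single $I_3$ term without any coefficient-matching.
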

\begin{proof}
Following the same steps in Appendix \ref{pthm_RQD} with
$\delta_1=0$.
\end{proof}
\begin{remark}
As shown in \emph{Lemma \ref{lemma_Iq}}, the effects of delay and
channel quantization are equivalent, so the approximation in
\eqref{R_D} also applies for the limited feedback system. This is
verified by simulation in Fig. \ref{fig_ZFLFB}, which shows that
this approximation is very accurate and can be used to analyze the
limited feedback system.
\end{remark}

\subsection{Mode Switching}\label{MS_QD}
We first verify the approximation \eqref{R_QD} in Fig.
\ref{fig_ZFSimvsCal}, which compares the approximation with
simulation results and the lower bound \eqref{R_ZFlb}, with $B=10$,
$v=20$ km/hr, $f_c=2$ GHz, and $T_s=1$ msec. We see that the lower
bound is very loose, while the approximation is accurate especially
for $N_t=2$. In fact, the approximation turns out to be a lower
bound. Note that due to the imperfect CSIT, the sum rate reduces
with $N_t$.

In Fig. \ref{figBFZFswitch_QD}, we compare the BF and ZF systems,
with $B=18$, $f_c=2$ GHz, $v=10$ km/hr, and $T_s=1$ msec. We see
that the approximation for the BF system almost matches the
simulation exactly. The approximation for the ZF system is accurate
at low to medium SNRs, and becomes a lower bound at high SNR, which
is approximately $0.7$ bps/Hz in total, or $0.175$ bps/Hz per user,
lower than the simulation. The throughput of the ZF system is
limited by the residual inter-user interference at high SNR, where
it is lower than the BF system. This motivates to switch between the
SU and MU MIMO modes. The approximations \eqref{Approx_BFQD} and
\eqref{R_QD} will be used to calculate the mode switching points.
There may be two switching points for the system with delay, as the
SU mode will be selected at both low and high SNR. These two points
can be calculated by providing different initial values to the
nonlinear equation solver, such as \emph{fsolve} in MATLAB.

\section{Numerical Results}\label{Sec:Num}
In this section, numerical results are presented. First, the
operating regions for different modes are plotted, which show the
impact of different parameters, including the normalized Doppler
frequency, the codebook size, and the number of transmit antennas.
Then the extension of our results for the ZF precoding to the MMSE
precoding is demonstrated.

\subsection{Operating Regions}
As shown in Section \ref{MS_QD}, finding mode switching points
requires solving a nonlinear equation, which does not have a
closed-form solution and gives little insight. However, it is easy
to evaluate numerically for different parameters, from which
insights can be drawn. In this section, with the calculated mode
switching points for different parameters, we plot the operating
regions for both SU and MU modes. The active mode for the given
parameter and the condition to activate each mode can be found from
such plots.

In Fig. \ref{figBFZFswitchregion_DQ}, the operating regions for both
SU and MU modes are plotted, for different normalized Doppler
frequencies and different number of feedback bits in Fig.
\ref{figBFZFswitchregion_DQ_SNRD} and Fig.
\ref{figBFZFswitchregion_DQ_SNRQ}, respectively, and with $U=N_t=4$.
There are analogies between two plots. Some key observations are as
follows:
\begin{enumerate}
\item For the delay plot Fig. \ref{figBFZFswitchregion_DQ_SNRD}, comparing the two curves for $B=16$
and $B=20$, we see that the smaller the codebook size, the smaller
the operating region for the ZF mode. For the ZF mode to be active,
$f_dT_s$ needs to be small, specifically we need $f_dT_s<0.055$ and
$f_dT_s<0.046$ for $B=20$ and $B=16$, respectively. These conditions
are not easily satisfied in practical systems. For example, with
carrier frequency $f_c=2$ GHz, mobility $v=20$ km/hr, the Doppler
frequency is $37$ Hz, and then to satisfy $f_dT_s<0.055$ the delay
should be less than $1.5$ msec.
\item For the codebook size plot Fig.
\ref{figBFZFswitchregion_DQ_SNRQ}, comparing the two curves with
$v=10$ km/hr and $v=20$ km/hr, as $f_dT_s$ increases ($v$
increases), the ZF operating region shrinks. For the ZF mode to be
active, we should have $B\geq{12}$ and $B\geq{14}$ for $v=10$ km/hr
and $v=20$ km/hr, respectively, which means a large codebook size.
Note that for BF we only need a small codebook size to get the
near-optimal performance \cite{LovHea04CommMag}.
\item For a given $f_dT_s$ and $B$, the SU mode will be active at
both low and high SNRs, which is due to its array gain and the
robustness to imperfect CSIT, respectively.
\end{enumerate}

The operating regions for different $N_t$ are shown in Fig.
\ref{figBFZFswitchregion_DQ_U}. We see that as $N_t$ increases, the
operating region for the MU mode shrinks. Specifically, we need
$B>12$ for $N_t=4$, $B>19$ for $N_t=6$, and $B>26$ for $N_t=8$ to
get the MU mode activated. Note that the minimum required feedback
bits per user for the MU mode grow approximately linearly with
$N_t$.

\subsection{ZF vs. MMSE Precoding}
It is shown in \cite{PeeHoc05Tcomm} that the regularized ZF
precoding, denoted as \emph{MMSE precoding} in this paper, can
significantly increase the throughput at low SNR. In this section,
we show that our results on mode switching with ZF precoding can
also be applied to MMSE precoding.

Denote
$\hat{\mathbf{H}}[n]=\left[\hat{\mathbf{h}}_1[n],\hat{\mathbf{h}}_2[n],\cdots,\hat{\mathbf{h}}_U[n]\right]^*$.
Then the MMSE precoding vectors are chosen to be the normalized
columns of the matrix \cite{PeeHoc05Tcomm}
\begin{equation}
\hat{\mathbf{H}}^*[n]\left(\hat{\mathbf{H}}[n]\hat{\mathbf{H}}^*[n]+\frac{U}{P}\mathbf{I}\right)^{-1}.
\end{equation}
From this, we see that the MMSE precoders converge to ZF precoders
at high SNR. Therefore, our derivations for the ZF system also apply
to the MMSE system at high SNR.

In Fig. \ref{fig_MMSE}, we compare the performance of ZF and MMSE
precoding systems with delay\footnote{This can also be done in the
system with both delay and quantization, which is more
time-consuming. As shown in \emph{Lemma \ref{lemma_Iq}}, the effects
of delay and quantization are equivalent, so the conclusion will be
the same.}. We see that the MMSE precoding outperforms ZF at low to
medium SNRs, and converges to ZF at high SNR while converges to BF
at low SNR. In addition, it has the same rate ceiling as the ZF
system, and crosses the BF curve roughly at the same point, after
which we need to switch to the SU mode. Based on this, we can use
the second predicted mode switching point (the one at higher SNR) of
the ZF system for the MMSE system. We compare the simulation results
and calculation results by \eqref{R_BFD} and \eqref{R_D} for the
mode switching points in Table \ref{MMSE}. For the ZF system, it is
the second switching point; for the MMSE system, it is the only
switching point. We see that the switching points for MMSE and ZF
systems are very close, and the calculated ones are roughly
$2.5\sim{3}$ dB lower.

\section{Conclusions}\label{Sec:Con}
In this paper, we compare the SU and MU MIMO transmissions in the
broadcast channel with delayed and quantized CSIT, where the amount
of delay and the number of feedback bits per user are fixed. The
throughput of MU-MIMO saturates at high SNR due to residual
inter-user interference, for which a SU/MU mode switching algorithm
is proposed. We derive accurate closed-form approximations for the
ergodic rates for both SU and MU modes, which are then used to
calculate the mode switching points. It is shown that the MU mode is
only possible to be active in the medium SNR regime, with a small
normalized Doppler frequency and a large codebook size.

For future work, the MU-MIMO mode studied in this paper is designed
with zero-forcing criterion, which is shown to be sensitive to CSI
imperfections, so robust precoding design is needed and the impact
of the imperfect CSIT on non-linear precoding should be
investigated. As power control is an effective way to combat
interference, it is interesting to consider the efficient power
control algorithm rather than equal power allocation to improve the
performance, especially in the heterogeneous scenario. It is also of
practical importance to investigate possible approaches to improve
the quality of the available CSIT with a fixed codebook size, e.g.
through channel prediction.

%%%%%%%%%%%%%%%%%%%%%%%%%%%%%%%%%%%%%%%%%%%%%%%%%%%%%%%
%%%%%%%%%        Appendices              %%%%%%%%%%%%%%
%%%%%%%%%%%%%%%%%%%%%%%%%%%%%%%%%%%%%%%%%%%%%%%%%%%%%%%

\useRomanappendicesfalse
\appendix
\subsection{Useful Results for Rate Analysis}\label{Result}
In this Appendix, we present some useful results that are used for
rate analysis in this paper.

The following lemma will be used frequently in the derivation of the
achievable rate.
\begin{lemma}\label{lemma_rate}
For a random variable $x$ with probability distribution function
(pdf) $f_X(x)$ and cumulative distribution function (cdf) $F_X(x)$,
we have
\begin{equation}
\mathbb{E}_X\left[\ln(1+X)\right]=\int_0^\infty\frac{1-F_X(x)}{1+x}dx.
\end{equation}
\end{lemma}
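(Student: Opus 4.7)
The plan is to exploit the integral representation $\ln(1+X)=\int_0^X \frac{dt}{1+t}$, which is valid whenever $X\geq 0$, and then swap the order of integration via Tonelli's theorem. Throughout the paper $X$ represents a nonnegative quantity (an SNR or SINR), so implicitly $F_X$ is supported on $[0,\infty)$.

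First I would rewrite the logarithm as a tail integral in indicator form: for every $x\geq 0$,
$$\ln(1+x) \;=\; \int_0^x \frac{dt}{1+t} \;=\; \int_0^\infty \frac{\mathbf{1}\{t<x\}}{1+t}\,dt.$$
Second, substitute $X$ for $x$ and take expectations. Because the integrand $\mathbf{1}\{t<X\}/(1+t)$ is nonnegative, Tonelli's theorem applies and justifies interchanging $\mathbb{E}[\cdot]$ with the outer integral:
$$\mathbb{E}_X[\ln(1+X)] \;=\; \int_0^\infty \frac{\mathbb{E}[\mathbf{1}\{t<X\}]}{1+t}\,dt \;=\; \int_0^\infty \frac{\Pr(X>t)}{1+t}\,dt \;=\; \int_0^\infty \frac{1-F_X(t)}{1+t}\,dt,$$
which is exactly the claimed identity.

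An equally short alternative is integration by parts on $\int_0^\infty \ln(1+x)f_X(x)\,dx$, taking $u=\ln(1+x)$ and the antiderivative $v=-(1-F_X(x))$; this choice makes the boundary term at $x=0$ vanish (since $\ln 1=0$) and the boundary term at infinity vanish whenever $\ln(1+x)(1-F_X(x))\to 0$, which is automatic as soon as $\mathbb{E}[\ln(1+X)]<\infty$. There is no real obstacle in either route: the identity is a standard tail-probability formula, and every distribution used in the paper (chi-squared, gamma, exponential) has a cdf whose complement decays exponentially, dominating the logarithmic growth and ensuring all regularity conditions hold.
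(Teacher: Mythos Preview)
Your proposal is correct. Your primary argument---writing $\ln(1+X)=\int_0^\infty \mathbf{1}\{t<X\}/(1+t)\,dt$ and applying Tonelli---differs from the paper's proof, which takes exactly the integration-by-parts route you sketch as an alternative: the paper writes $f_X(x)=-[1-F_X(x)]'$ and integrates by parts in one line, implicitly assuming the boundary term $\ln(1+x)(1-F_X(x))$ vanishes at infinity. Your Tonelli argument is marginally cleaner in that it sidesteps any boundary-term verification (nonnegativity alone justifies the interchange), whereas the paper's approach is the more familiar textbook manipulation; in substance the two are equivalent and equally short.
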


\begin{proof}
The proof follows the integration by parts.
\begin{align}
\mathbb{E}_X\left[\ln(1+X)\right]&=\int_0^\infty\ln(1+x)f_X(x)dx\notag\\
&=-\int_0^\infty\ln(1+x)\left[1-F_X(x)\right]'dx\notag\\
&\stackrel{(a)}{=}\int_0^\infty\frac{1-F_X(x)}{1+x}dx,
\end{align}
where $g'$ is the derivative of the function $g$, and step (a)
follows the integration by parts.
\end{proof}

The following lemma provides some useful integrals for rate
analysis, which can be derived from the results in \cite{GraRyz94}.
\begin{lemma}\label{lemma_int}
\begin{align}
I_1(a,b,m)&=\int_0^\infty\frac{x^me^{-a{x}}}{x+b}dx=\sum_{k=1}^m(k-1)!(-b)^{m-k}a^{-k}-(-1)^{m-1}b^me^{ab}E_1(ab)\label{I1}\\
I_2(a,b,m)&=\int_0^\infty\frac{e^{-ax}}{(x+b)^m}dx\notag\\
&=\left\{\begin{array}{lc}e^{ab}E_1(ab)&m=1\\\sum_{k=1}^{m-1}\frac{(k-1)!}{(m-1)!}\frac{(-a)^{m-k-1}}{b^k}+\frac{(-a)^{m-1}}{(m-1)!}e^{ab}E_1(ab)&m\geq{2}\end{array}\right.\label{I2}\\
I_3(a,b,m)&=\int_0^\infty\frac{e^{-ax}}{(x+b)^m(x+1)}dx\notag\\
&=\sum_{i=1}^m(-1)^{i-1}(1-b)^{-i}\cdot{I_2}\left(a,b,m-i+1\right)+(b-1)^{-m}\cdot{I_2}\left(a,1,1\right)\label{I3},
\end{align}
where $E_1(x)$ is the exponential-integral function of the first
order.
\end{lemma}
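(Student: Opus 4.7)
The plan is to reduce each of the three integrals to standard forms by an algebraic rewriting of the integrand followed by evaluation of elementary pieces and $E_1$-integrals. Since $I_3$ is built out of $I_2$ and $I_2$ specializes to the $m=1$ case involving $E_1$, I will prove them in the order $I_1$, $I_2$, $I_3$, so that each later identity may invoke the previous one.

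For $I_1(a,b,m)$, the plan is to apply the polynomial-division identity
\[
\frac{x^m}{x+b} \;=\; \sum_{k=0}^{m-1}(-b)^{k}\,x^{m-1-k} \;+\; \frac{(-b)^m}{x+b},
\]
which follows by repeatedly using $x=(x+b)-b$. Multiplying by $e^{-ax}$ and integrating, the polynomial terms give gamma integrals $\int_0^\infty x^{m-1-k}e^{-ax}\,dx=(m-1-k)!/a^{m-k}$, which after re-indexing $j=m-k$ produce the first finite sum in \eqref{I1}. For the remainder, the substitution $u=a(x+b)$ yields $\int_0^\infty \frac{e^{-ax}}{x+b}\,dx = e^{ab}E_1(ab)$, and multiplying by $(-b)^m=-(-1)^{m-1}b^m$ gives the last term of \eqref{I1}.

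For $I_2(a,b,m)$, the $m=1$ case is the same substitution $u=a(x+b)$ just used, giving $e^{ab}E_1(ab)$. For $m\ge 2$, the plan is to integrate by parts with $u=e^{-ax}$ and $dv=(x+b)^{-m}dx$, which yields the first-order recursion
\[
I_2(a,b,m) \;=\; \frac{1}{(m-1)\,b^{m-1}} \;-\; \frac{a}{m-1}\,I_2(a,b,m-1).
\]
Unrolling this recursion down to $I_2(a,b,1)=e^{ab}E_1(ab)$ and collecting factors of $-a/(m-1)$ across the stages produces the stated closed form, with the $E_1$-term picking up the coefficient $(-a)^{m-1}/(m-1)!$ and the elementary pieces producing the finite sum with coefficients $(k-1)!(-a)^{m-k-1}/((m-1)!\,b^k)$ after a clean re-indexing.

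For $I_3(a,b,m)$, the main step is a partial-fraction decomposition of $1/\bigl((x+b)^m(x+1)\bigr)$. The plan is to apply the one-step identity $\tfrac{1}{x+1}=\tfrac{1}{1-b}-\tfrac{1}{1-b}\cdot\tfrac{x+b}{x+1}$ recursively, which yields
\[
\frac{1}{(x+b)^m(x+1)} \;=\; \sum_{i=1}^{m}\frac{(-1)^{i-1}}{(1-b)^{i}\,(x+b)^{m-i+1}} \;+\; \frac{(-1)^m}{(1-b)^m\,(x+1)}.
\]
Multiplying by $e^{-ax}$ and integrating term-by-term identifies the sum with $\sum_{i=1}^m (-1)^{i-1}(1-b)^{-i}\,I_2(a,b,m-i+1)$ and the leftover piece with $(-1)^m(1-b)^{-m}\,I_2(a,1,1)=(b-1)^{-m}\,I_2(a,1,1)$, which is \eqref{I3}. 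The main obstacle is bookkeeping: verifying that the recursive partial fraction expansion is correct for all $m$ (an easy induction on $m$ using the one-step identity), and tracking the sign $(-1)^m/(1-b)^m=(b-1)^{-m}$ to reconcile with the form stated in the lemma. All other steps are routine manipulations of gamma integrals and the substitution defining $E_1$.
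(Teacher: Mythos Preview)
Your argument is correct: the polynomial division for $I_1$, the integration-by-parts recursion for $I_2$, and the recursive partial-fraction identity for $I_3$ all check out and lead to the stated closed forms after the re-indexing you describe. The paper itself does not give a proof of this lemma; it merely states that the identities ``can be derived from the results in \cite{GraRyz94}'' (Gradshteyn--Ryzhik). So your approach is genuinely different in that it is self-contained and elementary rather than an appeal to integral tables. The benefit of your route is that it makes clear why each formula has its particular structure (the finite sum from the polynomial or recursive part, the $E_1$ term from the leftover simple pole), and it requires no external reference; the paper's route is shorter on the page but opaque. One minor clarification worth adding if you write this out in full: the one-step identity you quote for $I_3$ is really the standard partial-fraction step $\frac{1}{(x+b)(x+1)}=\frac{1}{1-b}\bigl(\frac{1}{x+b}-\frac{1}{x+1}\bigr)$ applied inside $\frac{1}{(x+b)^m(x+1)}$, which directly yields the recursion $J_m=\frac{1}{(1-b)(x+b)^m}-\frac{1}{1-b}J_{m-1}$ you then unroll.
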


\subsection{Proof of Theorem \ref{thmBFQD}}\label{pthmBFQD}
The average SNR is
\begin{align}
\overline{\mbox{SNR}}_{BF}^{(QD)}&=\mathbb{E}\left[P\Big|\mathbf{h}^*[n]{\mathbf{f}^{(QD)}[n]}\Big|^2\right]\notag\\
&=P\mathbb{E}\left[\Big|(\rho\mathbf{h}[n-1]+\mathbf{e}[n])^*\hat{\mathbf{h}}[n-1]\Big|^2\right]\notag\\
&\stackrel{(a)}{=}P|\rho\mathbf{h}^*[n-1]\hat{\mathbf{h}}[n-1]|^2+P|\mathbf{e}^*[n]\hat{\mathbf{h}}[n-1]|^2\notag\\
&\stackrel{(b)}{\leq}P{N_t}\rho^2\left(1-2^{-\frac{B}{N_t-1}}\right)+P\mathbb{E}\left[|\hat{\mathbf{h}}^*[n-1]\cdot[\mathbf{e}[n]\mathbf{e}^*[n]]\cdot\hat{\mathbf{h}}[n-1]|\right]\notag\\
&\stackrel{(c)}{=}P{N_t}\rho^2\left(1-2^{-\frac{B}{N_t-1}}\right)+P\epsilon_e^2,\notag
\end{align}
As $\mathbf{e}[n]$ is independent of $\mathbf{h}[n-1]$, it is also
independent of $\hat{\mathbf{h}}[n-1]$, which gives (a). Step (b)
follows \eqref{BF_Q}. Step (c) is from the fact
$\mathbf{e}[n]\sim\mathcal{CN}(\mathbf{0},\epsilon_e^2\mathbf{I}_{N_t})$
and $|\hat{\mathbf{h}}[n-1]|^2=1$.

\subsection{Proof of Theorem \ref{thmR_BFD}}\label{pthmR_BFD}
Denote $y_1=\|\mathbf{h}[n-1]\|^2$ and
$y_2=\frac{1}{\epsilon_e^2}|\mathbf{e}^*[n]\tilde{\mathbf{h}}[n-1]|^2$,
then $y_1\sim\chi^2_{2N_t}$, $y_2\sim\chi_2^2$, and they are
independent. The received SNR can be written as
$x=\eta_1{y_1}+\eta_2{y_2}$, where $\eta_1=P\rho^2$ and
$\eta_2=P\epsilon_e^2$. The cdf of $x$ is given as \cite{Sim02}
\begin{align}
F_X(x)&=1-\left(\frac{\eta_2}{\eta_2-\eta_1}\right)^{N_t}e^{-x/\eta_2}\notag\\
&\quad+e^{-x/\eta_1}\left(\frac{\eta_1}{\eta_2-\eta_1}\right)\cdot\sum_{i=0}^{N_t-1}\sum_{l=0}^i\frac{1}{(i-l)!}\left(\frac{\eta_2}{\eta_2-\eta_1}\right)^{N_t-1-i}\left(\frac{x}{\eta_1}\right)^{i-l}.
\end{align}
Denote $a_0=\frac{\eta_2}{\eta_2-\eta_1}$ and following \emph{Lemma
\ref{lemma_rate}} we have
\begin{align}
&\mathbb{E}_X\left[\ln(1+X)\right]\notag\\
=&\int_0^\infty\frac{1-F_X(x)}{1+x}dx\notag\\
=&a_0^{N_t}\int_0^\infty\frac{e^{x/\eta_2}}{1+x}dx-(1-a_0)\sum_{i=0}^{N_t-1}\sum_{l=0}^i\frac{a_0^{N_t-1-i}}{(i-l)!}\left(\frac{1}{\eta_1}\right)^{i-l}\int_0^\infty\frac{x^{i-l}e^{-x/\eta_1}}{1+x}dx\notag\\
=&a_0^{N_t}I_2(1/{\eta_2},1,1)-(1-a_0)\sum_{i=0}^{N_t-1}\sum_{l=0}^i\frac{a_0^{N_t-1-i}}{(i-l)!}\left(\frac{1}{\eta_1}\right)^{i-l}I_1(1/\eta_1,1,i-l).
\end{align}
where $I_1(\cdot,\cdot,\cdot)$ and $I_2(\cdot,\cdot,\cdot)$ are
given in \eqref{I1} and \eqref{I2}, respectively.

\subsection{Proof of Lemma
\ref{lemma_Iq}}\label{plemma_Iq} Let
$x=\|\mathbf{h}_u[n-D]\|^2\sin^2\theta\sim{\Gamma}(M-1,\delta)$,
$y\sim\beta(1,M-2)$, and $x$ is independent of $y$. Then the
interference term due to quantization is $z=xy$. The cdf of $z$ is
\begin{align}
P_Z(z)&=P(xy\leq{z})\notag\\
&=\int_0^\infty{F}_{Y|X}\left(\frac{z}{x}\right)f_X(x)dx\notag\\
&=\int_0^z{f}_X(x)dx+\int_z^\infty\left(1-\left(1-\frac{z}{x}\right)^{M-2}\right)f_X(x)dx\notag\\
&=\int_0^\infty{f}_X(x)dx-\int_z^\infty\left(1-\frac{z}{x}\right)^{M-2}x^{M-2}\frac{e^{-x/\delta}}{(M-2)!\delta^{M-1}}dx\notag\\
&=1-e^{-z/\delta}\int_z^\infty(x-z)^{M-2}\frac{e^{-(x-z)/\delta}}{(M-2)!\delta^{M-1}}dx\notag\\
&\stackrel{(a)}{=}1-e^{-z/\delta},
\end{align}
where step (a) follows the equality
$\int_0^\infty{y}^Me^{-\alpha{y}}=M!\alpha^{-(M+1)}$.

\subsection{Proof of Theorem \ref{thm_RQD}}\label{pthm_RQD}
Assuming each interference term in \eqref{ZFSINR_QD_approx} is
independent of each other and independent of the signal power term,
denote
$\sum_{u'\neq{u}}\rho_u^2|\mathbf{h}_u^*[n-1]\mathbf{f}^{(QD)}_{u'}[n]|^2=\rho_u^2\delta{y}_1$
and
$\sum_{u'\neq{u}}|\mathbf{e}_u^*[n]\mathbf{f}^{(QD)}_{u'}[n]|^2=\epsilon_{e,u}^2y_2$,
then from \emph{Lemma \ref{lemma_Iq}} we have
$y_1\sim\chi^2_{2(N_t-1)}$, and $y_2\sim\chi^2_{2(N_t-1)}$ as
$\mathbf{e}_u[n]$ is complex Gaussian with variance
$\epsilon_{e,u}^2$ and independent of the normalized vector
$\mathbf{f}^{(QD)}_{u'}[n]$. In addition, the signal power
$|\mathbf{h}^*_u[n]\mathbf{f}^{(QD)}_u[n]|^2\sim\chi^2_2$. Then the
received SINR for the $u$-th user is approximated as
\begin{equation}
\gamma_{ZF,u}^{(QD)}\approx\frac{\alpha{z}}{1+\beta(\delta_1y_1+\delta_2y_2)}\triangleq{x},
\end{equation}
where $\alpha=\beta=\frac{P}{U}$, $\delta_1=\rho_u^2\delta$,
$\delta_2=\epsilon_{e,u}^2$, $y_1\sim\chi^2_{2M}$,
$y_1\sim\chi^2_{2M}$, $M=N_t-1$, $z\sim\chi^2_2$, and $y_1$, $y_2$,
$z$ are independent of each other.

Let $y=\delta_1y_1+\delta_2y_2$, then the pdf of $y$, which is the
sum of two independent chi-square random variables, is given as
\cite{Sim02}
\begin{align}
p_Y(y)&=e^{-y/\delta_1}\sum_{i=0}^{M-1}a^{(1)}_iy^i+e^{-y/\delta_2}\sum_{i=0}^{M-1}a^{(2)}_iy^i\notag\\
&=\sum_{j=1}^2\sum_{i=0}^{M-1}e^{-y/\delta_j}a^{(j)}_iy^i,
\end{align}
where
\begin{align}
a^{(1)}_i&=\frac{1}{\delta_1^{i+1}(M-1)!}\left(\frac{\delta_1}{\delta_1-\delta_2}\right)^M\frac{(2(M-1)-i)!}{i!(M-1-i)!}\left(\frac{\delta_2}{\delta_2-\delta_1}\right)^{M-1-i}\label{eq_a1}\\
a^{(2)}_i&=\frac{1}{\delta_2^{i+1}(M-1)!}\left(\frac{\delta_2}{\delta_2-\delta_1}\right)^M\frac{(2(M-1)-i)!}{i!(M-1-i)!}\left(\frac{\delta_1}{\delta_1-\delta_2}\right)^{M-1-i}\label{eq_a2}.
\end{align}
The cdf of $x$ is
\begin{align}
F_X(x)&=P\left(\frac{\alpha{z}}{1+\beta{y}}\leq{x}\right)\notag\\
&=\int_0^\infty{F}_{Z|Y}\left(\frac{x}{\alpha}(1+\beta{y})\right)p_Y(y)dy\notag\\
&=\int_0^\infty\left(1-e^{-\frac{x}{\alpha}(1+\beta{y})}\right)p_Y(y)dy\notag\\
&=1-e^{-x/\alpha}\int_0^\infty{e}^{-\beta{xy}/\alpha}p_Y(y)dy\notag\\
&=1-e^{-x/\alpha}\int_0^\infty\left\{\sum_{j=1}^2\sum_{i=0}^{M-1}\exp\left[-\left(\frac{\beta}{\alpha}x+\frac{1}{\delta_j}\right)y\right]a^{(j)}_iy^i\right\}dy\notag\\
&\stackrel{(a)}{=}1-e^{-x/\alpha}\sum_{j=1}^2\sum_{i=0}^{M-1}\left[\frac{a^{(j)}_ii!}{\left(\frac{\beta}{\alpha}x+\frac{1}{\delta_j}\right)^{i+1}}\right],
\end{align}
where step (a) follows the equality
$\int_0^\infty{y}^Me^{-\alpha{y}}=M!\alpha^{-(M+1)}$.

Then the ergodic achievable rate for the $u$-th user is approximated
as
\begin{align}
R_{ZF,u}^{(QD)}&=\mathbb{E}_\gamma\left[\log_2\left(1+\gamma_{ZF,u}^{(QD)}\right)\right]\notag\\
&\approx\log_2(e)\mathbb{E}_X\left[\ln(1+X)\right]\notag\\
&\stackrel{(a)}{=}\log_2(e)\int_0^\infty\frac{1-F_X(x)}{x+1}dx\notag\\
&=\log_2(e)\int_0^\infty\sum_{i=0}^{M-1}\sum_{j=1}^2\left[a^{(j)}_ii!\left(\frac{\alpha}{\beta}\right)\frac{e^{-x/\alpha}}{\left(x+\frac{\alpha}{\beta\delta_j}\right)^{i+1}(x+1)}\right]\notag\\
&\stackrel{(b)}{=}\log_2(e)\sum_{i=0}^{M-1}\sum_{j=1}^2\left[a^{(j)}_ii!\left(\frac{\alpha}{\beta}\right)^{i+1}I_3\left(\frac{1}{\alpha},\frac{\alpha}{\beta\delta_j},i+1\right)\right],
\end{align}
where step (a) follows from \emph{Lemma \ref{lemma_rate}}, step (b)
follows the expression of $I_3(\cdot,\cdot,\cdot)$ in \eqref{I3}.

\bibliographystyle{IEEEtran}
\bibliography{CSIDelay}

\begin{table}[ht]
\caption{System Parameters}\label{Tbl1} \centering
\begin{tabular}{c|c}
\hline \hline Symbol & Description\\
\hline
\hline $N_t$ & number of transmit antennas\\
\hline $U$ & number of mobile users\\
\hline $B$ & number of feedback bits\\
\hline $L$ & quantization codebook size, $L=2^B$\\
\hline $P$ & average SNR\\
\hline $n$ & time index\\
\hline $T_s$ & the length of each symbol\\
\hline $f_d$ & the Doppler frequency\\
 \hline\hline
\end{tabular}
\end{table}

\begin{table}[ht]
\caption{Mode Switching Points}\label{MMSE} \centering
\begin{tabular}{c|c|c|c}
\hline \hline & $f_dT_s=0.03$ & $f_dT_s=0.04$ & $f_dT_s=0.05$\\
\hline
\hline MMSE (Simulation) & $44.2$ dB & $35.7$ dB & $29.5$ dB\\
\hline ZF (Simulation) & $44.2$ dB & $35.4$ dB & $28.6$ dB\\
\hline ZF (Calculation) & $41.6$ dB & $32.9$ dB & $26.1$ dB\\
 \hline\hline
\end{tabular}
\end{table}

\begin{figure}[htb]
\centering
\includegraphics[width=3.6in]{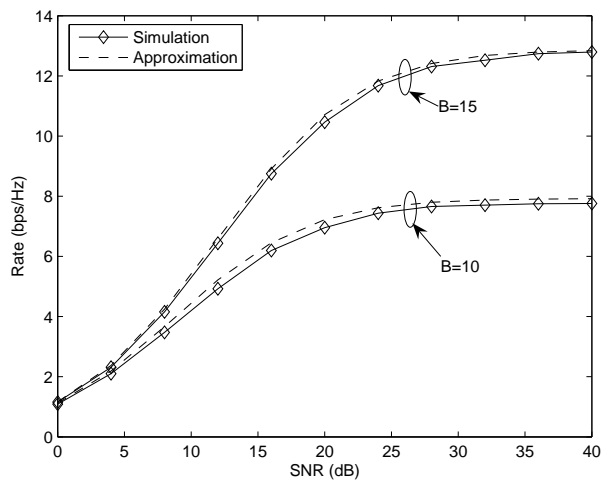}
\caption{Approximations and simulations for the ZF system with
limited feedback, $N_t=U=4$.}\label{fig_ZFLFB}
\end{figure}

\begin{figure}[htb]
\centering
\includegraphics[width=3.6in]{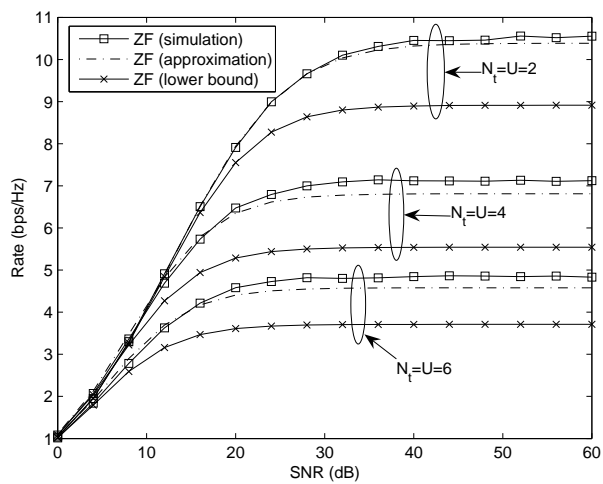}
\caption{Comparison of approximation in \eqref{R_QD}, the lower
bound in \eqref{R_ZFlb}, and the simulation results for the ZF
system with both delay and channel quantization. $B=10$, $f_c=2$
GHz, $v=20$ km/hr, and $T_s=1$ msec.}\label{fig_ZFSimvsCal}
\end{figure}

\begin{figure}[t]
\centering
\includegraphics[width=3.6in]{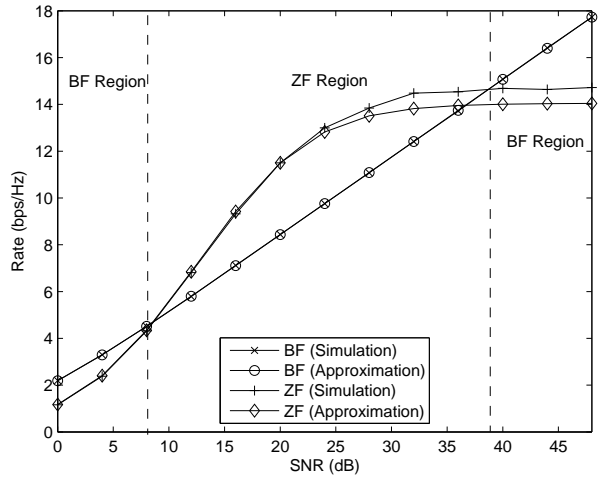}
\caption{Mode switching between BF and ZF modes with both CSI delay
and channel quantization, $B=18$, $N_t=4$, $f_c=2$ GHz, $T_s=1$
msec, $v=10$ km/hr.}\label{figBFZFswitch_QD}
\end{figure}

\begin{figure*}[htb]
\centerline{\subfigure[Different
$f_dT_s$.]{\includegraphics[width=3.4in]{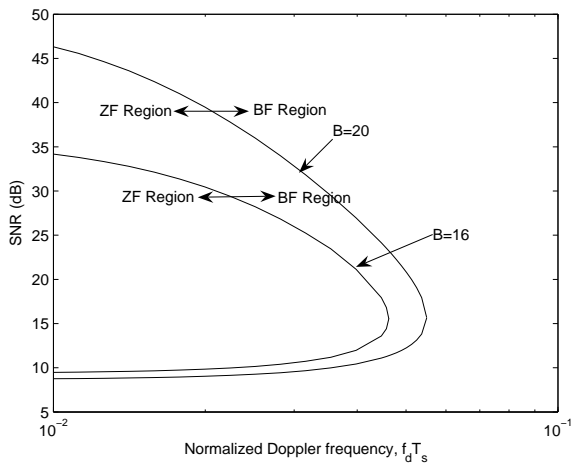}
\label{figBFZFswitchregion_DQ_SNRD}} \hfil \subfigure[Different $B$,
$f_c=2$ GHz, $T_s=1$
msec.]{\includegraphics[width=3.4in]{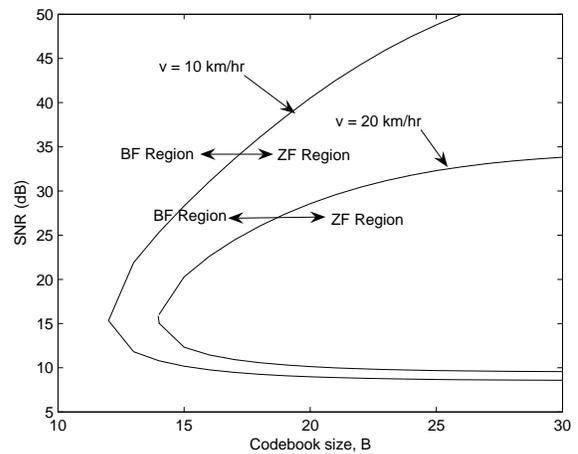}
\label{figBFZFswitchregion_DQ_SNRQ}}} \caption{Operating regions for
BF and ZF with both CSI delay and quantization, $N_t=4$.}
\label{figBFZFswitchregion_DQ}
\end{figure*}

\begin{figure}[htb]
\centering
\includegraphics[width=3.6in]{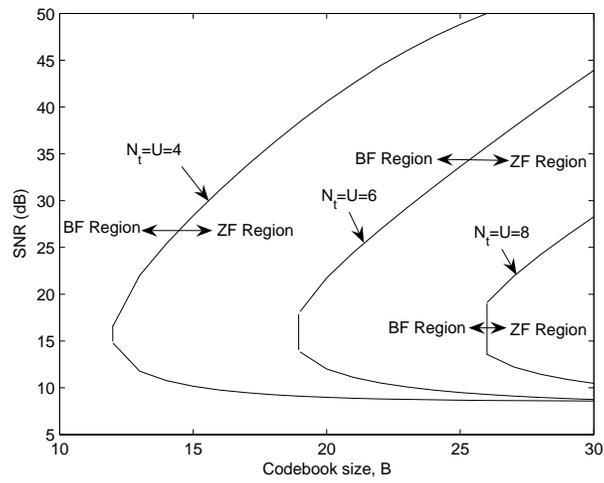}
\caption{Operating regions for BF and ZF with different $N_t$,
$f_c=2$ GHz, $v=10$ km/hr, $T_s=1$
msec.}\label{figBFZFswitchregion_DQ_U}
\end{figure}

\begin{figure}[htb]
\centering
\includegraphics[width=3.6in]{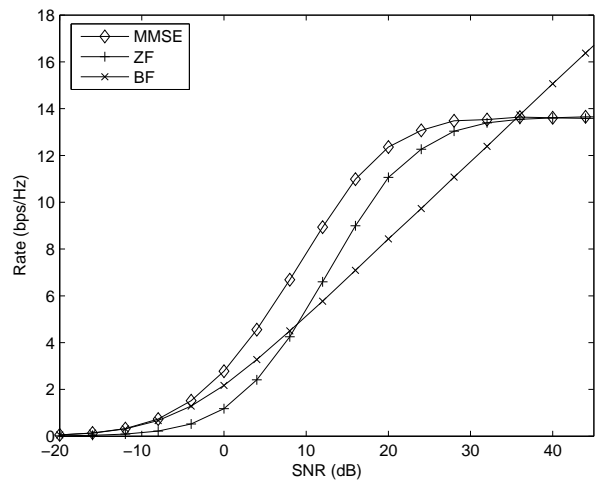}
\caption{Simulation results for BF, ZF and MMSE systems with delay,
$N_t=U=4$, $f_dT_s=0.04$.}\label{fig_MMSE}
\end{figure}

\end{document}